\newtheorem{theorem}{Theorem}
\newtheorem{lemma}[theorem]{Lemma}	
\newtheorem{definition}{Definition}
\DeclareMathOperator*{\argmin}{arg\,min}
\DeclareMathOperator*{\argmax}{arg\,max}
\newcommand{\pl}{\mathcal{P}}
\newcommand{\I}{\mathcal{I}}
\newcommand{\X}{\mathcal{X}}
\title{Learning to Correlate in Multi-Player General-Sum Sequential Games}
\author {
	{\large Andrea Celli, \, Alberto Marchesi, \, Tommaso Bianchi, \, Nicola Gatti} \\
	DEIB, Politecnico di Milano, Piazza Leonardo da Vinci 32, Milan, Italy \\
	\texttt{\{andrea.celli,alberto.marchesi,nicola.gatti\}@polimi.it,} \\
	\texttt{tommaso4.bianchi@mail.polimi.it} \\
}
\begin{document}
	
	\twocolumn[
        \maketitle
    ]
	
	\begin{abstract}
In the context of multi-player, general-sum games, there is an increasing interest in solution concepts modeling some form of communication among players, since they can lead to socially better outcomes with respect to Nash equilibria, and may be reached through learning dynamics in a decentralized fashion.
In this paper, we focus on \emph{coarse correlated equilibria} (CCEs) in sequential games.
First, we complete the picture on the complexity of finding social-welfare-maximizing CCEs by showing that the problem is not in Poly-\textsf{APX} unless \textsf{P} = \textsf{NP}.
Furthermore, simple arguments show that CFR---working with behavioral strategies---may not converge to a CCE.
However, we devise a simple variant (CFR-S) which provably converges to the set of CCEs, but may be empirically inefficient.
Thus, we design a variant of the CFR algorithm (called CFR-Jr) which approaches the set of CCEs with a regret bound sub-linear in the size of the game, and is shown to be dramatically faster than CFR-S and the state-of-the-art algorithms to compute CCEs.

\end{abstract}
	\section{Introduction}\label{sec:intro}


A number of recent studies explore relaxations of the classical notion of equilibrium (\emph{i.e.}, \emph{Nash equilibrium} (NE)~\cite{nash1951non}), allowing to model communication among the players~\cite{barman,farina2018exante,roughgarden2009intrinsic}.
%
%
Communication naturally brings about the possibility of playing correlated strategies.
%
These are customarily modeled through a trusted external mediator who privately recommends actions to the players~\cite{aumann1974subjectivity}.
%
%
%
In particular, a correlated strategy is a \emph{correlated equilibrium} (CE) if each player has no incentive to deviate from the recommendation, assuming the other players would not deviate either.
A popular variation of the CE is the \emph{coarse correlated equilibrium} (CCE), which only prevents deviations before knowing the recommendation~\cite{moulin1978}.
In sequential games, CEs and CCEs are well-suited for scenarios where the players have limited communication capabilities and can only communicate before the game starts,
%
such as, \emph{e.g.}, military settings where field units have no time or means of communicating during a battle, collusion in auctions where communication is illegal during bidding, and, in general, any setting with costly communication channels or blocking environments.

CCEs present a number of appealing properties. 
A CCE can be reached through simple (no-regret) learning dynamics in a decentralized fashion~\cite{foster1997calibrated,hart2000simple}, and, in several classes of games (such as, \emph{e.g.}, normal-form and succinct games~\cite{papadimitriou2008,jiang2015polynomial}), it can be computed exactly in time polynomial in the size of the input.
Furthermore, an optimal (\emph{i.e.}, social-welfare-maximizing) CCE may provide arbitrarily larger welfare than an optimal CE, which, in turn, may provide arbitrarily better welfare than an optimal NE~\cite{celli2018computing}.
Although the problem of finding an optimal CCE is \textsf{NP}-hard for some game classes (such as, \emph{e.g.}, graphical, polymatrix, congestion, and anonymous games~\cite{barman}), 
\citet{roughgarden2009intrinsic} shows that the CCEs reached through regret-minimizing procedures have near-optimal social welfare when the $(\lambda,\mu)$-smoothness condition holds.
This happens, \emph{e.g.}, in some specific auctions, congestion games, and even in Bayesian settings, as showed by \citet{hartline2015no}).
Thus, decentralized computation via learning dynamics, computational efficiency, and welfare optimality make the CCE one of the most interesting solution concepts for practical applications.
However, the problem of computing CCEs has been addressed only for some specific games with particular structures~\cite{barman,hartline2015no}.
In this work, we study how to compute CCEs in the general class of games which are sequential, general-sum, and multi-player.
This is a crucial advancement of CCE computation, as sequential games provide a model for strategic interactions which is richer and more adherent to real-world situations than the normal form.

In sequential games, it is known that, when there are two players without chance moves, an optimal CCE can be computed in polynomial time~\cite{celli2018computing}.
\citet{celli2018computing} also provide an algorithm (with no polynomiality guarantees) to compute solutions in multi-player games, using a column-generation procedure with a MILP pricing oracle. 
%
As for computing approximate CCEs, in the normal-form setting, any  \emph{Hannan consistent} regret-minimizing procedure for simplex decision spaces may be employed to approach the set of CCEs~\cite{blum2007learning,cesa2006prediction}---the most common of such techniques is \emph{regret matching} (RM)~\cite{blackwell1956analog,hart2000simple}.
However, approaching the set of CCEs in sequential games is more demanding. 
One could represent the sequential game with its equivalent normal form and apply RM to it. However, this would result in a guarantee on the cumulative regret which would be exponential in the size of the game tree (see Section~\ref{sec:preliminaries}).
Thus, reaching a good approximation of a CCE could require an exponential number of iterations.
The problem of designing learning algorithms avoiding the construction of the normal form has been successfully addressed in sequential games for the two-player, zero-sum setting.
This is done by decomposing the overall regret locally at the information sets of the game~\cite{farina2018online}.
The most widely adopted of such approaches are  \emph{counterfactual regret minimization} (CFR)~\cite{zinkevich2008regret} and CFR+~\cite{tammelin2015solving,tammelin2014solving}, which originated variants such as~\cite{brown2018solving,brown2018deep}.
These techniques were the key for many recent remarkable results~\cite{bowling2015heads,brown2017safe,brown2018superhuman,moravvcik2017deepstack}.
However, these algorithms work with players' behavioral strategies rather than with correlated strategies, and, thus, they are not guaranteed to approach CCEs in general-sum games, even with two players.
The only known theoretical guarantee of CFR when applied to multi-player, general-sum games is that it excludes dominated actions~\cite{gibson2013regret}.
Some works also attempt to apply CFR to multi-player, zero-sum games, see, \emph{e.g.},~\cite{risk2010using}. 

\paragraph{Original contributions} First, we complete the picture on the computational complexity of finding an optimal CCE in sequential games, showing that the problem is inapproximable (\emph{i.e.}, not in Poly-\textsf{APX}) unless \textsf{P} = \textsf{NP} in games with three or more players (chance included).
In the rest of the paper, we focus on how to compute approximate CCEs in multi-player, general-sum, sequential games using no-regret-learning procedures.
We start pointing out simple examples where CFR-like algorithms available in the literature cannot be directly employed to our purpose, as they only provide players' average strategies whose product is not guaranteed to converge to an approximate CCE.
However, we show how CFR can be easily adapted to approach the set of CCEs in multi-player, general-sum sequential games by resorting to sampling procedures (we call the resulting, naive algorithm CFR-S).
Then, we design an enhanced version of CFR (called CFR-Jr) which computes an averaged correlated strategy which is guaranteed the convergence to an approximate CCE with a bound on the regret sub-linear in the size of the game tree.
The key component of CFR-Jr is a polynomial-time algorithm which constructs, at each iteration, the players' normal-form strategies by working on the game tree, avoiding to build the (exponential-sized) normal-form representation. 
We evaluate the scalability of CFR-S and CFR-Jr on standard testbeds. 
While both algorithms solve instances which are orders of magnitude larger than those solved by previous state-of-the-art CCE-finding techniques, CFR-Jr dramatically outperforms CFR-S.
Moreover, CFR-Jr proved to be a good heuristic to compute optimal CCEs, returning nearly-socially-optimal solutions in all the instances of our testbeds.

	\section{Preliminaries}\label{sec:preliminaries}

In this section, we introduce some basic concepts which are used in the rest of the paper (see~\citet{shoham2008multiagent} and~\citet{cesa2006prediction} for further details).

\subsection{Extensive-form games and relevant solution concepts}

We focus on \emph{extensive-form games} (EFGs) with imperfect information and perfect recall.
We denote the set of players as $\mathcal{P}\cup\{c\}$, where $c$ is the \emph{Nature} (\emph{chance}) player (representing exogenous stochasticity) selecting actions with a fixed known probability distribution.
$H$ is the set of nodes of the game tree, and a node $h \in H$ is identified by the ordered sequence of actions from the root to the node.
$Z \subseteq H$ is the set of terminal nodes, which are the leaves of the game tree.
For every $h \in H \setminus Z$, we let $P(h)$ be the unique player who acts at $h$ and $A(h)$ be the set of actions she has available.
We write $h \cdot a$ to denote the node reached when $a \in A(h)$ is played at $h$.
For each player $i \in \mathcal{P}$, $u_i: Z \rightarrow \mathbb{R}$ is the payoff function.
We denote by $\Delta$ the maximum range of payoffs in the game, \emph{i.e.}, 
$\Delta=\max_{i\in\pl}\left(\max_{z\in Z}u_i(z)-\min_{z\in Z}u_i(z)\right)$. 
%

We represent imperfect information using \emph{information sets} (from here on, infosets). 
Any infoset $I$ belongs to a unique player $i$, and it groups nodes which are indistinguishable for that player, \emph{i.e.}, $A(h) = A(h')$ for any pair of nodes $h, h' \in I$. 
$\mathcal{I}_i$ denotes the set of all player $i$'s infosets, which form a partition of $\{ h \in H \mid P(h)=i \}$.
We denote by $A(I)$ the set of actions available at infoset $I$.
In perfect-recall games, the infosets are such that no player forgets information once acquired.

We denote with $\pi_i$ a \emph{behavioral strategy} of player $i$, which is a vector defining a probability distribution at each player~$i$'s infoset. 
Given $\pi_i$, we let $\pi_{i,I}$ be the (sub)vector representing the probability distribution at $I\in\I_i$, with $\pi_{i,I,a}$ denoting the probability of choosing action $a\in A(I)$.
%
%
%

An EFG has an equivalent tabular (\emph{normal-form}) representation.
%
A \textit{normal-form plan} for player $i$ is a vector $\sigma_i\in\Sigma_i=\bigtimes_{I\in\mathcal{I}_i} A(I)$ which specifies an action for each player $i$'s infoset.
%
%
Then, an EFG is described through a $|\pl|$-dimensional matrix specifying a utility for each player at each \emph{joint normal-form plan} $\sigma\in\Sigma=\bigtimes_{i\in \pl}\Sigma_i$. 
%
%
The expected payoff of player $i$, when she plays $\sigma_i \in \Sigma_i$ and the opponents play normal-form plans in $\sigma_{-i} \in \Sigma_{-i} = \bigtimes_{j \neq i \in \mathcal{P}} \Sigma_j$, is denoted, with an overload of notation, by $u_i(\sigma_i,\sigma_{-i})$.
Finally, a \textit{normal-form strategy} $x_i$ is a probability distribution over $\Sigma_i$. 
We denote by $\mathcal{X}_i$ the set of the normal-form strategies of player $i$.
Moreover, $\X$ denotes the set of joint probability distributions defined over $\Sigma$.
We also introduce the following notation.
We let $\rho^{\pi_i}$ be a vector in which each component $\rho^{\pi_i}_z$ is the probability of reaching the terminal node $z \in Z$, given that player $i$ adopts the behavioral strategy $\pi_i$ and the other players play so as to reach $z$.
Similarly, given a normal-form plan $\sigma_i \in \Sigma_i$, we define the vector $\rho^{\sigma_i}$.
%
Moreover, with an abuse of notation, $\rho^{\pi_i}_I$ and $\rho^{\sigma_i}_I$ denote the probability of reaching infoset $I \in \I_i$.
Finally, $Z(\sigma_i) \subseteq Z$ is the subset of terminal nodes which are (potentially) reachable if player $i$ plays according to $\sigma_i \in \Sigma_i$.

%
%

The classical notion of CE by~\citet{aumann1974subjectivity} models correlation via the introduction of an external mediator who, before the play, draws the joint normal-form plan $\sigma^\ast\in\Sigma$ according to a publicly known $x^\ast\in\X$, and privately communicates each \emph{recommendation} $\sigma_i^\ast$ to the corresponding player.
After observing their recommended plan, each player decides whether to follow it or not.
A CCE is a relaxation of the CE, defined by~\citet{moulin1978}, which enforces protection against deviations which are independent from the sampled joint normal-form plan.
\begin{definition}\label{def:cce}
	A CCE of an EFG is a probability distribution $x^\ast\in\X$ such that, for every $i\in\pl$, and $\sigma_i'\in\Sigma_i$, it holds:
	\[
	\sum_{\sigma_i\in\Sigma_{i}}\sum_{\sigma_{-i}\in\Sigma_{-i}}x^\ast(\sigma_i,\sigma_{-i})\left(u_i(\sigma_i,\sigma_{-i})-u_i(\sigma_i',\sigma_{-i})\right)\geq 0.
	\]
\end{definition}
CCEs differ from CEs in that a CCE only requires that following the suggested plan is a best response in expectation, before the recommended plan is actually revealed. 
In both equilibrium concepts, the entire probability distribution according to which recommendations are drawn is revealed before the game starts. After that, each player commits to playing a normal-form plan (see Appendix~\ref{sec:appendix_eq} for further details on the various notions of correlated equilibrium in EFGs).
An NE~\cite{nash1951non} is a CCE which can be written as a product of players' normal-form strategies $x_i^\ast \in \X_i$.
In conclusion, an $\varepsilon$-CCE is a relaxation of a CCE in which every player has an incentive to deviate less than or equal to $\varepsilon$ (the same definition holds true for $\varepsilon$-CE and $\varepsilon$-NE).

\subsection{Regret and regret minimization}\label{sec:preliminari_regret}

In the \emph{online convex optimization} framework~\cite{zinkevich2003online}, each player $i$ plays repeatedly against an unknown environment by making a series of decisions $x_i^1,x_i^2,\ldots,x_i^t$.
In the basic setting, the decision space of player $i$ is the whole normal-form strategy space $\X_i$.
At iteration $t$, after selecting $x_i^t$, player $i$ observes a utility $u_i^t(x_i^t)$. 
The \emph{cumulative external regret} of player $i$ up to iteration $T$ is defined as
\begin{equation}\label{eq:regret}
R_i^T=\max_{\hat x_i\in \X_i}\sum_{t=1}^Tu_i^t(\hat x_i)-\sum_{t=1}^Tu_i^t(x_i^t).
\end{equation}

A \emph{regret minimizer} is a function providing the next player $i$'s strategy $x_i^{t+1}$ on the basis of the past history of play and the observed utilities up to iteration $t$. 
A desirable property for regret minimizers is \emph{Hannan consistency}~\cite{hannan1957approximation}, which requires that  $\limsup_{T\to\infty} \frac{1}{T} R_i^T\leq 0$, \emph{i.e.}, the cumulative regret grows at a sublinear rate in the number of iterations $T$.

In an EFG, the regret can be defined at each infoset. 
After $T$ iterations, the cumulative regret for not having selected action $a\in A(I)$ at infoset $I \in \I_i$ (denoted by $R_I^T(a)$) is the cumulative difference in utility that player $i$ would have experienced by selecting $a$ at $I$ instead of following the behavioral strategy $\pi_i^t$ at each iteration $t$ up to $T$.
Then, the regret for player $i$ at infoset $I\in \I_i$ is defined as $R_I^T=\max_{a\in A(I)} R_I^T(a)$.  
Moreover, we let $R_I^{T,+}(a)=\max\{R_I^T(a),0\}$. 

\begin{figure*}[t]
	\centering
	\includegraphics[width=0.9\textwidth]{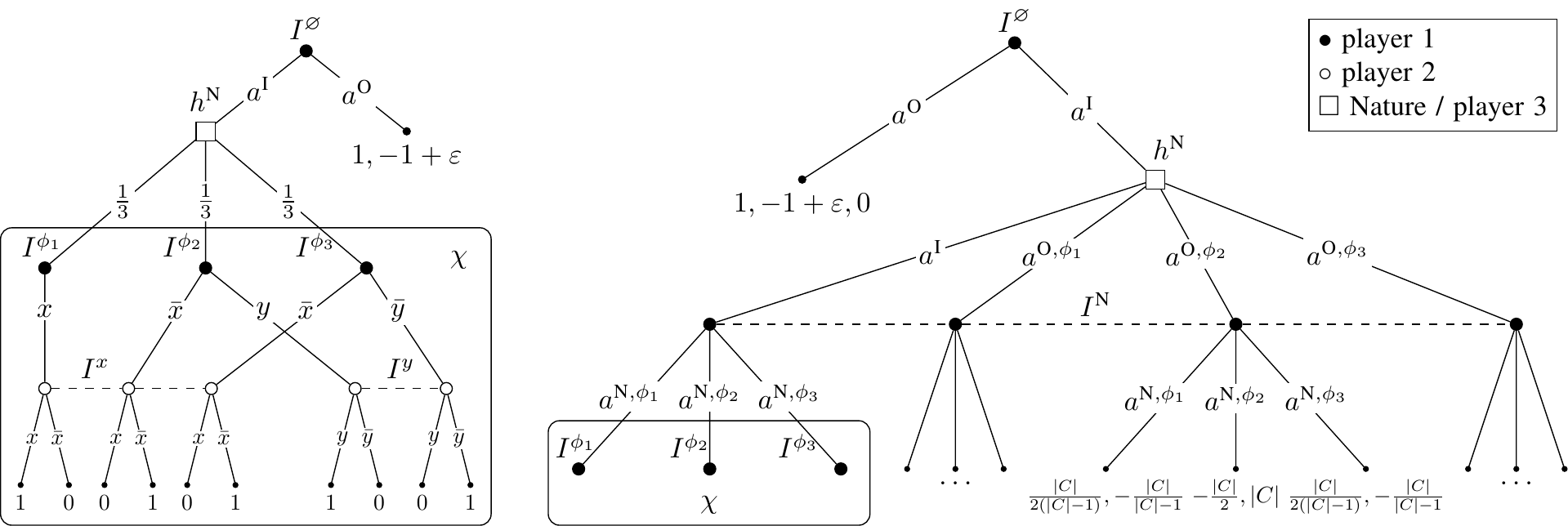}
	\caption{On the left, an example of game for the reduction of Theorem~\ref{thm:apx_hard_nature}, where $V=\{ x,y,z\}$, $C = \{ \phi_1, \phi_2,\phi_3 \}$, $\phi_1 = x$, $\phi_2= \bar x \vee y$, and $\phi_3= \bar x \vee y$. On the right, an example of game for the reduction of Theorem~\ref{thm:apx_hard_no_nature}, with $V$and $C$ as before.}
	\label{fig:reduction}
\end{figure*}

RM~\cite{hart2000simple} is the most widely adopted regret-minimizing scheme when the decision space is $\X_i$ (\emph{e.g.}, in normal-form games).
In the context of EFGs, RM is usually applied locally at each infoset, where the player selects a distribution over available actions proportionally to their positive regret.
Specifically, at iteration $T+1$ player $i$ selects actions $a\in A(I)$ according to the following probability distribution: 
\[
\pi_{i,I,a}^{T+1}=\begin{cases}
\frac{R_{I}^{T,+}(a)}{\sum_{a'\in A(I)}R_{I}^{T,+}(a')}, & \mbox{if } \sum_{a'\in A(I)}R_I^{T,+}(a')	>0\\
\frac{1}{|A(I)|}, & \mbox{otherwise }
\end{cases}.
\]
Playing according to RM at each iteration guarantees, on iteration $T$, $R^T_I\leq \Delta\frac{\sqrt{|A(I)|}}{\sqrt{T}}$~\cite{cesa2006prediction}.
CFR~\cite{zinkevich2008regret} is an anytime algorithm to compute $\varepsilon$-NEs in two-player, zero-sum EFGs.
CFR minimizes the external regret $R_i^T$ by employing RM locally at each infoset.
%
%
In two-player, zero-sum games, if both players have cumulative regrets such that $\frac{1}{T}R_i^T \leq \varepsilon$, then their average behavioral strategies are a $2\varepsilon$-NE~\cite{waugh2009abstraction}.
CFR+ is a variation of classical CFR which exhibits better practical performances~\cite{tammelin2015solving}. 
However, it uses alternation (\emph{i.e.}, it alternates which player updates her regret on each iteration), which complicates the theoretical analysis to prove convergence~\cite{farina2018online,burch18revisiting}.

	\section{Hardness of approximating optimal CCEs}\label{sec:inapx}

We address the following question: given an EFG, can we find a social-welfare-maximizing CCE in polynomial time?
As shown by~\citet{celli2018computing}, the answer is yes in two-player EFGs without Nature.
Here, we give a negative answer to the question in the remaining cases, \emph{i.e.}, two-player EFGs with Nature (Theorem~\ref{thm:apx_hard_nature}) and EFGs with three or more players without Nature (Theorem~\ref{thm:apx_hard_no_nature}).
Specifically, we provide an even stronger negative result: there is no polynomial-time approximation algorithm which finds a CCE whose value approximates that of a social-welfare-maximizing CCE up to any polynomial factor in the input size unless \textsf{P} = \textsf{NP}.
\footnote{Formally, an $r$-approximation algorithm $\mathcal{A}$ for an optimization problem is such that $\frac{\textsc{Opt}}{\textsc{Apx}} \leq r$, where \textsc{Opt} is the value of an optimal solution and \textsc{Apx} is the value of the solution returned by $\mathcal{A}$. See~\cite{ausiello2012complexity} for additional details.}
We prove our results by means of a reduction from \textsc{Sat}, a well known \textsf{NP}-complete problem~\cite{garey1979computers}, which reads as follows.

%
%
%
%

\begin{definition}[\textsc{Sat}]
	Given a finite set $C$ of clauses defined over a finite set $V$ of variables, is there a truth assignment to the variables which satisfies all clauses?
\end{definition}

For clarity, Figure~\ref{fig:reduction} shows concrete examples of the EFGs employed for the reductions of Theorems~\ref{thm:apx_hard_nature}~and~\ref{thm:apx_hard_no_nature}.
Here, we only provide proof sketches, while we report full proofs in Appendix~\ref{sec:appendix_inapx}.

\begin{restatable}{theorem}{thmInapxA}\label{thm:apx_hard_nature}
	Given a two-player EFG with Nature, the problem of computing a social-welfare-maximizing CCE is not in Poly-\textsf{APX} unless \textsf{P} = \textsf{NP}.
	\footnote{Poly-\textsf{APX} is the class of optimization problems that admit a polynomial-time $\mathsf{poly}(\eta)$-approximation algorithm, where $\mathsf{poly}(\eta)$ is a polynomial function of the input size $\eta$~\cite{ausiello2012complexity}.}
\end{restatable}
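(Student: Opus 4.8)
The plan is to prove the statement by a gap-introducing reduction from \textsc{Sat}, building in polynomial time a two-player EFG with Nature whose social-welfare-maximizing CCE has value bounded away from $0$ exactly when the input formula is satisfiable. Given an instance $(V,C)$ of \textsc{Sat}, I would construct a game in which the normal-form plans of one player encode truth assignments to the variables in $V$ (one infoset per variable, with the two actions playing the role of the truth values), so that a recommendation $\sigma^\ast$ drawn from a distribution $x^\ast\in\X$ corresponds to a (possibly randomized) assignment. Nature is the crucial ingredient distinguishing this case from the Nature-free two-player case, which is poly-time by~\citet{celli2018computing}: I would let Nature randomize over the clauses of $C$, so that a single player's plan is effectively tested against every clause in expectation, and this correlation is what the CCE incentive constraints must contend with. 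The second player's actions are designed to act as a ``verifier'' whose deviations become profitable precisely when some clause is left unsatisfied by the recommended assignment.

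The heart of the argument is then to establish the two directions of the gap. For soundness, if $(V,C)$ is satisfiable, I would exhibit an explicit CCE --- deterministically recommending a satisfying assignment --- and check Definition~\ref{def:cce} to confirm it is valid and attains a fixed positive social welfare, which I normalize to $1$. For completeness, if $(V,C)$ is unsatisfiable, I would argue that \emph{every} $x^\ast\in\X$ meeting the incentive constraints of Definition~\ref{def:cce} has social welfare $0$: since no assignment satisfies all clauses, for any distribution over recommended plans there is a clause violated with positive probability under Nature's randomization, and I would show this hands the verifier player a profitable deviation $\sigma_i'$ unless the mediator suppresses all welfare-generating outcomes, driving the optimum to $0$.

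Given this $\{0\}$-versus-$[1,\infty)$ gap, the inapproximability conclusion is immediate. Any candidate $\mathsf{poly}(\eta)$-approximation algorithm $\mathcal{A}$ returns a feasible CCE whose value $\textsc{Apx}$ satisfies $0\le\textsc{Apx}\le\textsc{Opt}$ and $\textsc{Opt}/\textsc{Apx}\le\mathsf{poly}(\eta)$; hence $\textsc{Apx}>0$ if and only if $\textsc{Opt}>0$, i.e.\ if and only if the formula is satisfiable. Running $\mathcal{A}$ and testing whether its output has positive welfare would therefore decide \textsc{Sat} in polynomial time, forcing $\textsf{P}=\textsf{NP}$. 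Feasibility is never an issue here, since CCEs always exist as a relaxation of NE.

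The main obstacle I anticipate is the gadget design together with the verification of the CCE incentive constraints in both directions, and in particular the completeness (unsatisfiable) direction: I must choose Nature's distribution over clauses and the payoff values so that \emph{no} correlated distribution --- however cleverly it mixes inconsistent or only partially satisfying assignments --- can simultaneously respect all the deviation inequalities of Definition~\ref{def:cce} and extract positive welfare. Tuning the payoffs so that the verifier's deviation is strictly profitable exactly on violated clauses, while keeping the satisfying-assignment recommendation incentive-compatible, is where the construction has to be engineered most carefully.
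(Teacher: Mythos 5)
Your high-level plan coincides with the paper's: a reduction from \textsc{Sat} in which one player's normal-form plans encode truth assignments (one infoset per variable), Nature draws a clause uniformly at random, and the other player's incentives detect unsatisfied clauses, producing a welfare gap that no polynomial approximation factor can bridge. However, the proposal stops exactly where the proof actually lives. The decisive gadget --- which you yourself defer as ``the main obstacle'' --- is an opt-out action $a^\textsc{O}$ at the root for the clause-checking player (player 1), with terminal payoffs $(1,\,-1+\varepsilon)$; if she instead plays $a^\textsc{I}$, Nature picks a clause $\phi$ uniformly, player 1 selects a literal of $\phi$, player 2's action at the infoset of the corresponding variable fixes its truth value, and both players receive $1$ iff the selected literal is made true (and $0$ otherwise). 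The crux is that playing ``in'' gives player 1 an expected utility equal to the fraction of clauses whose selected literal is true, which is at most $1-1/|C|<1$ for \emph{every} pure joint plan when the formula is unsatisfiable, while deviating to $a^\textsc{O}$ guarantees exactly $1$; Definition~\ref{def:cce} then forces every CCE to put all its mass on $a^\textsc{O}$, collapsing the social welfare to $\varepsilon$, whereas a satisfying assignment yields a pure-strategy CCE of welfare $2$. Without specifying this guaranteed-payoff outside option, calibrated so that its value exactly matches the maximum achievable ``in'' payoff, the completeness (unsatisfiable) direction you outline cannot be verified, so what you have is the correct strategy but not yet a proof.

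A further technical point: your target gap of $0$ versus $1$ is slightly off. The paper deliberately sets the opt-out welfare to $\varepsilon=2^{-\eta}>0$ rather than $0$: this keeps the optimal social welfare strictly positive on every instance, so that the ratio $\textsc{Opt}/\textsc{Apx}$ in the definition of a $\mathsf{poly}(\eta)$-approximation is well defined, while $\varepsilon$ is still smaller than the value $2/\mathsf{poly}(\eta)$ that any such algorithm must return on satisfiable instances once $\eta$ is large enough. With an exact-zero unsatisfiable case your test ``$\textsc{Apx}>0$ iff $\textsc{Opt}>0$'' rests on a degenerate $0/0$ ratio, and the standard notion of an $r$-approximation presupposes positive optimum values; choosing an exponentially small but positive $\varepsilon$ sidesteps this at no cost to the argument.
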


\begin{proof}[Proof sketch]
	An example of our reduction from \textsc{Sat} is provided on the left of Figure~\ref{fig:reduction}.
	Its main idea is the following: player 2 selects a truth assignment to the variables, while player 1 chooses a literal for each clause in order to satisfy it.
	It can be proved that there exists a CCE in which each player gets utility $1$ if and only if \textsc{Sat} is satisfiable (as player 1 selects $a^\textsc{I}$), otherwise player 1 plays $a^\textsc{O}$ in any CCE and its social welfare is $\varepsilon$.
	Assume there is a a polynomial-time $\mathsf{poly}(\eta)$-approximation algorithm $\mathcal{A}$.
	If \textsc{Sat} is satisfiable, $\mathcal{A}$ would return a CCE with social welfare at least $ \frac{2}{\mathsf{poly}(\eta)}$.
	Since, for $\eta$ sufficiently large it holds $\frac{2}{\mathsf{poly}(\eta)} > \frac{1}{2^{\eta}}$, then $\mathcal{A}$ would allow us to decide in polynomial time whether \textsc{Sat} is satisfiable, leading to a contradiction unless \textsf{P} = \textsf{NP}.
\end{proof}

\begin{restatable}{theorem}{thmInapxB}\label{thm:apx_hard_no_nature}
	Given a three-player EFG without Nature, the problem of computing a social-welfare-maximizing CCE is not in Poly-\textsf{APX} unless \textsf{P} = \textsf{NP}.
\end{restatable}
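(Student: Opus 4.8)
The plan is to mirror the reduction behind Theorem~\ref{thm:apx_hard_nature}, again reducing from \textsc{Sat}, but to replace the exogenous Nature move (which selected the clause to be tested) by a genuine third player. As in the two-player construction, player 2 commits to a truth assignment of the variables in $V$, player 1 first decides whether to enter the verification gadget (action $a^\textsc{I}$) or to opt out (action $a^\textsc{O}$, which yields a guaranteed social welfare of $\varepsilon$), and, upon entering, selects a literal for the clause under scrutiny. The new ingredient is player 3, whose job is to choose which clause $\phi\in C$ is actually tested, thereby taking over the function that Nature had in the previous reduction.

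The central design goal is that, in every CCE, player 3's clause selection behaves like Nature's fixed uniform distribution: each clause must be tested with probability bounded away from $0$ and, crucially, this choice must be effectively decorrelated from player 2's assignment. I would enforce this through player 3's payoffs, giving her an adversarial \emph{catch-a-failure} incentive: player 3 is rewarded whenever the literal chosen by player 1 for the tested clause is false under player 2's assignment. The key steps are then (i) to show that if player 1 is recommended to enter under an assignment that leaves some clause unsatisfied, then there is a \emph{fixed} clause $\phi'$ to which player 3 can profitably deviate (independently of the recommendation, as required by Definition~\ref{def:cce}), so no such profile can belong to a CCE in which player 1 enters; and (ii) to calibrate a large penalty $M$ so that, whenever a tested clause is failed, player 1's realized payoff is so low that entering is dominated by $a^\textsc{O}$.

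With the gadget in place, the two implications follow as in Theorem~\ref{thm:apx_hard_nature}. If \textsc{Sat} is satisfiable, the mediator recommends a satisfying assignment to player 2 together with the corresponding true literals to player 1; every tested clause then succeeds, player 3 has no failure to exploit, and one obtains a CCE with social welfare $2$. If \textsc{Sat} is unsatisfiable, every assignment leaves at least one clause unsatisfied, so in any CCE player 1 is driven to play $a^\textsc{O}$ and the social welfare collapses to $\varepsilon$. The inapproximability conclusion is then identical to that of Theorem~\ref{thm:apx_hard_nature}: a polynomial-time $\mathsf{poly}(\eta)$-approximation algorithm would, for $\eta$ large enough, separate welfare $2$ from welfare $\varepsilon = 2^{-\eta}$ and hence decide \textsc{Sat}, contradicting $\textsf{P}\neq\textsf{NP}$.

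I expect the main obstacle to be exactly the faithful simulation of Nature's exogenous, uncorrelated clause selection by a strategic player within the permissive CCE model. Because a CCE allows arbitrary correlation among the players, one must rule out that the mediator correlates player 3's clause choice with player 2's assignment so as to always steer testing toward an already-satisfied clause, which would let player 1 enter safely even in the unsatisfiable case. Making player 3's catch-a-failure deviations unavoidable---i.e., guaranteeing that any attempt to have player 1 enter leaves player 3 a profitable fixed-clause deviation---is the delicate part, and it is where the precise choice of player 3's payoffs and of the magnitude of the failure penalty $M$ must be argued carefully.
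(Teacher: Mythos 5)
Your construction is genuinely different from the paper's. In the paper's gadget the clause to be tested is \emph{not} chosen by player 3: player 3 only chooses between letting the verification subgame proceed (action $a^\textsc{I}$, which gives her payoff $0$) and one of $|C|$ ``audit'' actions $a^{\textsc{O},\phi}$, while it is \emph{player 1} who names a clause, at an infoset $I^\textsc{N}$ that pools the node following $a^\textsc{I}$ with all the audit nodes. The audit payoffs form a matching-pennies-style gadget calibrated so that player 3 profits from $a^{\textsc{O},\phi}$ exactly when player 1 names $\phi$ with marginal probability $p > 1/|C|$ (she gets $p|C| - (1-p)\frac{|C|}{|C|-1} > 0$ versus $0$ from $a^\textsc{I}$). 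Hence, in any CCE in which the subgame is reached, player 1's clause distribution must be exactly uniform; this reproduces Nature's move verbatim and the rest of the two-player argument is inherited unchanged. The advantage of that design is that the incentive analysis never interacts with the \textsc{Sat} structure: uniformity is forced on a marginal distribution, so the correlation issue you worry about simply does not arise. Your design, with player 3 as an adversarial clause selector rewarded for catching failures, can be made to work, but it demands exactly the extra analysis you flag---and one of your claims needs repair.

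Specifically, step (i) as stated---``whenever player 1 is recommended to enter under a non-satisfying assignment, player 3 has a profitable fixed-clause deviation''---is false in general: if the mediator already lets player 3 catch failures often enough on path, she may have no profitable deviation at all. What saves the argument is a dichotomy. Conditional on entering, every recommended assignment falsifies at least one clause (all literals of an unsatisfied clause are false, so player 1's literal choices cannot help), hence by averaging over the correlated distribution some \emph{fixed} clause $\phi'$ fails with conditional probability at least $1/|C|$, and player 3's best fixed deviation catches a failure at least that often. Therefore, in any CCE, either player 3 strictly profits by deviating to $\phi'$, or her on-path catch rate is already at least $1/|C|$ conditional on entering; in the latter case player 1 incurs the penalty $M$ with conditional probability at least $1/|C|$, so choosing $M$ larger than $|C|$ times the payoff range (still polynomially encodable) makes the fixed deviation to $a^\textsc{O}$ strictly profitable for her. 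Either horn contradicts a CCE with positive entering probability. With step (i) replaced by this quantitative dichotomy your proof closes; without it, the unsatisfiable direction has a genuine hole.
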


\begin{proof}[Proof sketch]
	An example of our reduction from \textsc{Sat} is provided on the right of Figure~\ref{fig:reduction}.
	It is based on the same idea as that of the previous proof, where the uniform probability distribution played by Nature is simulated by a particular game gadget (requiring a third player). 
\end{proof}

	\section{CFR in multi-player general-sum sequential games}\label{sec:cfr_s}

In this section, we first highlight why CFR cannot be directly employed when computing CCEs of general-sum games.
Then, we show a simple way to amend it.

\subsection{Convergence to CCEs in general-sum games}

When players follow strategies recommended by a regret minimizer, the \emph{empirical frequency of play} approaches the set of CCEs~\cite{cesa2006prediction}.
Suppose that, at time $t$, the players play a joint normal-form plan $\sigma^t \in \Sigma$ drawn according to their current strategies.
Then, the empirical frequency of play after $T$ iterations is defined as the joint probability distribution $\bar x^T \in \mathcal{X}$ such that $\bar x^T(\sigma) \coloneqq \frac{| t \leq T : \sigma^t = \sigma |}{T}$ for every $\sigma \in \Sigma$.
However, vanilla CFR and its most popular variations (such as, \emph{e.g.}, CFR+~\cite{tammelin2015solving} and DCFR~\cite{brown2018solving}) do not keep track of the empirical frequency of play, as they only keep track of the players' average behavioral strategies. 
This ensures that the strategies are compactly represented, but it is not sufficient to recover a CCE in multi-player, general-sum games. 
Indeed, it is possible to show that, even in normal-form games, if the players play according to some regret-minimizing strategies, then the product distribution $x \in \mathcal{X}$ resulting from players' (marginal) average strategies may not converge to a CCE.
In order to see this, we provide the following simple example.

\begin{figure}
	\centering
	\renewcommand{\arraystretch}{1.2}
	\begin{tabular}{r|c|c|}
			\multicolumn{1}{r}{}
			&  \multicolumn{1}{c}{$ \sigma_{L} $}
			& \multicolumn{1}{c}{$ \sigma_{R} $} \\
			\cline{2-3}
			$ \sigma_{L} $ & $1,1$ & $1,0$ \\
			\cline{2-3}
			$ \sigma_{R} $ & $0,1$ & $1,1$ \\
			\cline{2-3}
	\end{tabular}
	\caption{Game where $\bar x_1^T\otimes \bar x_2^T$ does not converge to a CCE.}
	\label{fig:game_es}
\end{figure}

\paragraph{Example}
Consider the two-player normal-form game depicted in Figure~\ref{fig:game_es}.
At iteration $t$, let players' strategies $x_1^t, x_2^t$ be such that $x_1^t (\sigma_{L})= x_2^t(\sigma_{L}) = (t+1)\mod 2$.
Clearly, $u_1^t(x^t) = u_2^t(x^t) = 1$ for any $t$.
For both players, at iteration $t$, the regret of not having played $\sigma_{L}$ is $0$, while the regret of $\sigma_{R}$ is $-1$ if and only if $t$ is even, otherwise it is $0$.
As a result, after $T$ iterations, $R_1^T = R_2^T = 0$, and, thus, $x_1^t$ and $x_2^t$ minimize the cumulative external regret.
Players' average strategies  $\bar x_1^T=\frac{1}{T} \sum_{t=1}^{T} x_1^t$ and $\bar x_2^T=\frac{1}{T} \sum_{t=1}^{T} x_2^t$ converge to $(\frac{1}{2},\frac{1}{2})$ as $T \rightarrow \infty$.
However, $x \in \mathcal{X}$ such that $x(\sigma) = \frac{1}{4}$ for every $\sigma \in \Sigma$ is not a CCE of the game. Indeed, a player is always better off playing $\sigma_{L}$, obtaining a utility of $1$, while she only gets $\frac{3}{4}$ if she chooses to stick to $x$.
We remark that $\bar x^T$ converges, as $T \rightarrow \infty$, to $x \in \mathcal{X}:x(\sigma_{L},\sigma_{L}) = x(\sigma_{R},\sigma_{R}) = \frac{1}{2}$, which is a CCE.

\begin{figure}
	\centering
	\includegraphics[width=0.45\textwidth]{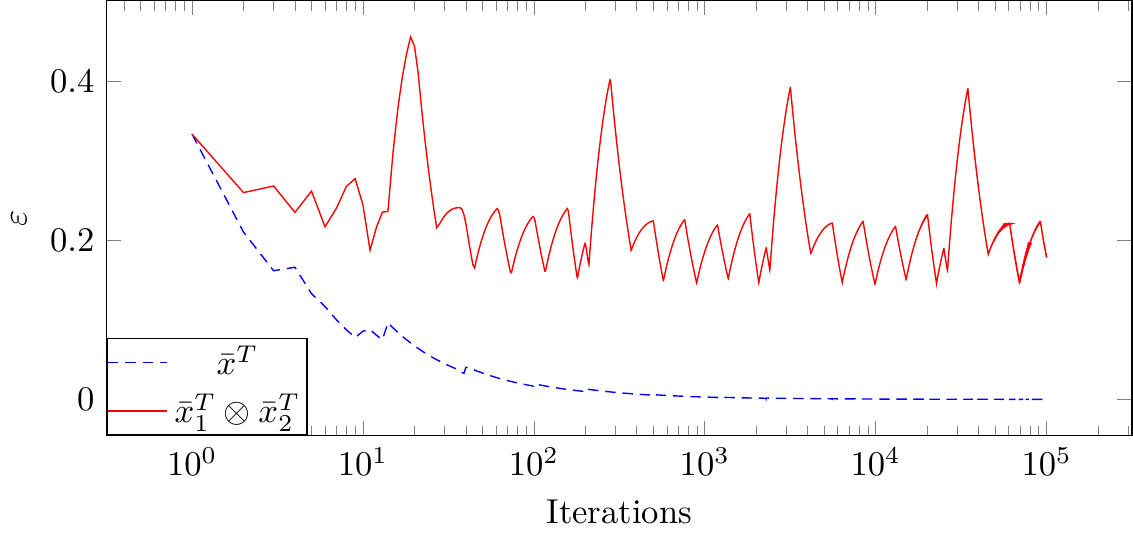}
	\caption{Approximation attained by $\bar x^T$ and $\bar x^T_1\otimes\bar x^T_2$.}
	\label{fig:esempi}
\end{figure}

The example above employs handpicked regret-minimizing strategies, but similar examples can be easily found when applying common regret minimizers.
As an illustrative case, Figure~\ref{fig:esempi} shows, that, even with a simple variation of the Shapley game (see Appendix~\ref{sec:appendix-cfr-s}), the outer product of the average strategies $\bar x_1^T\otimes \bar x_2^T$ obtained via RM does not converge to a CCE as $T\to\infty$.
It is clear that the same issue may (and does, see Figures \ref{fig:goof} and \ref{fig:shapley_game}) happen when directly applying CFR to general-sum EFGs.
%
%

\subsection{CFR with sampling (CFR-S)}\label{sec:sub_cfr_s}

\begin{figure}
	\begin{algorithm}[H]
		\centering
		\scriptsize
		\caption{\texttt{CFR-S for player $i$}}
		\begin{algorithmic}[1]
			\Function{\textsc{CFR-S}}{$\Gamma$,$i$}
			\State Initialize a regret minimizer for each $I\in\I_i$
			\State $t\gets 1$
			\While{$t < T$}
			\State $\sigma_i^t\gets\textsc{Recommend}(I_\varnothing)$
			\State Observe $u_i^t(\sigma_i)\coloneqq u_i(\sigma_i,\sigma_{-i}^t)$\label{line:observe}
			\State $\textsc{Update}(I_\varnothing,\sigma_i^t,u_i^t)$
			\State $t\gets t+1$
			\EndWhile
			\EndFunction
		\end{algorithmic}
		\label{alg:cfr-s}
	\end{algorithm}
\end{figure}

Motivated by the previous examples, we describe a simple variation of CFR guaranteeing approachability to the set of CCEs even in multi-player, general-sum EFGs.
Vanilla CFR proceeds as follows (see Subsection~\ref{sec:preliminari_regret} for the details): for each iteration $t$, and for each infoset $I\in\I_i$, player $i$ observes the realized utility for each action $ a\in A(I)$, and then computes $\pi_{i,I}^t$ according to standard RM. 
Once $\pi_{i,I}^t$ has been computed, it is used by the regret minimizers of infosets on the path from the root to $I$ so as to compute observed utilities.
We propose CFR \emph{with sampling} (CFR-S) as a simple way to keep track of the empirical frequency of play. 
The basic idea is letting each player~$i$, at each $t$, draw $\sigma_i^t$ according to her current strategy.
Algorithm~\ref{alg:cfr-s} describes the structure of CFR-S, where function \textsc{Recommend} builds a normal-form plan $\sigma_i^t$ by sampling, at each $I\in\I_i$, an action in $A(I)$ according to $\pi_i^t$ computed via RM, and \textsc{Update} updates the average regrets local to each regret minimizer by propagating utilities according to $\sigma_i^t$.
Each player $i$ experiences utilities depending, at each $t$, on the sampled plans $\sigma_{-i}^t$ (Line~\ref{line:observe}).
Joint normal form plans $\sigma^t\coloneqq (\sigma_i^t,\sigma_{-i}^t)$ can be easily stored to compute the empirical frequency of play.
We state the following (see Appendix~\ref{sec:appendix-cfr-s} for detailed proofs):
\begin{restatable}{theorem}{cfrs}\label{th:cfr-s}
	The empirical frequency of play $\bar x^T$ 
	obtained with CFR-S converges to a CCE almost surely, for $T\to\infty$.
\end{restatable}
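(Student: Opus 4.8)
The plan is to reduce the statement to a regret bound and then show that the relevant (realized) regret vanishes almost surely. The starting point is the observation that the quality of $\bar x^T$ as an approximate CCE is controlled exactly by the empirical external regret built from the sampled plans. Fix a player $i\in\pl$ and a deviation $\sigma_i'\in\Sigma_i$. Since $\bar x^T(\sigma)=\frac{|t\le T:\sigma^t=\sigma|}{T}$, for any function of the joint plan the expectation under $\bar x^T$ is a time average over the sampled $\sigma^t$, so the left-hand side of the CCE constraint in Definition~\ref{def:cce} evaluated at $\bar x^T$ rewrites as
\[
\sum_{\sigma\in\Sigma}\bar x^T(\sigma)\bigl(u_i(\sigma_i,\sigma_{-i})-u_i(\sigma_i',\sigma_{-i})\bigr)=\frac{1}{T}\sum_{t=1}^T\bigl(u_i(\sigma_i^t,\sigma_{-i}^t)-u_i(\sigma_i',\sigma_{-i}^t)\bigr).
\]
Taking the worst deviation, $\bar x^T$ is an $\varepsilon^T$-CCE with $\varepsilon^T=\max\{0,\max_{i\in\pl}\tfrac{1}{T}\hat R_i^T\}$, where $\hat R_i^T\coloneqq\max_{\sigma_i'\in\Sigma_i}\sum_{t=1}^T u_i(\sigma_i',\sigma_{-i}^t)-\sum_{t=1}^T u_i(\sigma_i^t,\sigma_{-i}^t)$ is the realized external regret incurred against the sampled opponents' plans. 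It therefore suffices to show that $\tfrac{1}{T}\hat R_i^T\to 0$ almost surely for every $i$.

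The regret that CFR-S actually controls, however, is not $\hat R_i^T$ but the external regret measured against the behavioral strategies $\pi_i^t$ that generate the samples. Because each regret minimizer observes the full utility vector $u_i^t(\cdot)=u_i(\cdot,\sigma_{-i}^t)$ against the realized opponents' plans and updates via RM, the counterfactual regret decomposition of~\citet{zinkevich2008regret} bounds the external regret with respect to the played behavioral strategies pathwise by the sum of local regrets, $R_i^T\le\sum_{I\in\I_i}\max\{R_I^{T},0\}$. The RM guarantee recalled in Section~\ref{sec:preliminari_regret} gives $R_I^T=O(\sqrt{T})$ at each infoset, hence $R_i^T=O(|\I_i|\sqrt{T})$ and $\tfrac{1}{T}R_i^T\to 0$ deterministically. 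Since a linear utility over $\X_i$ attains its maximum at a vertex of the simplex, i.e. at a deterministic plan, the hindsight-optimal comparators of $\hat R_i^T$ and $R_i^T$ coincide, and subtracting the two expressions yields the exact identity $\hat R_i^T=R_i^T+D_i^T$, where $D_i^T\coloneqq\sum_{t=1}^T\bigl(u_i(\pi_i^t,\sigma_{-i}^t)-u_i(\sigma_i^t,\sigma_{-i}^t)\bigr)$.

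The term $D_i^T$ is the crux, and handling it is where the main obstacle lies: CFR bounds the regret in terms of the behavioral strategies $\pi_i^t$, whereas the empirical frequency of play is built from the sampled realizations $\sigma_i^t$, so bridging the two requires a stochastic (rather than deterministic) argument. Let $\mathcal{F}_{t-1}$ denote the history through iteration $t-1$, which determines all the $\pi_j^t$. Conditioning on $\mathcal{F}_{t-1}$ and on the opponents' sampled plans $\sigma_{-i}^t$, the plan $\sigma_i^t$ is obtained by drawing an action at each infoset from $\pi_i^t$; by realization-equivalence of behavioral and normal-form strategies in perfect-recall games, $\expe[u_i(\sigma_i^t,\sigma_{-i}^t)\mid \mathcal{F}_{t-1},\sigma_{-i}^t]=u_i(\pi_i^t,\sigma_{-i}^t)$. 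Hence the summands of $D_i^T$ form a martingale difference sequence with increments bounded in absolute value by $\Delta$, and Azuma--Hoeffding gives $\Pr[|D_i^T|>\Delta\sqrt{2T\ln(2/\delta)}]\le\delta$ for every $\delta>0$. Choosing a summable schedule $\delta_T$ (e.g. $\delta_T=1/T^2$) and invoking the Borel--Cantelli lemma yields $\tfrac{1}{T}D_i^T\to 0$ almost surely. Combining this with $\tfrac{1}{T}R_i^T\to 0$ gives $\tfrac{1}{T}\hat R_i^T\to 0$ almost surely, so $\varepsilon^T\to 0$ almost surely and $\bar x^T$ approaches the set of CCEs, as claimed.
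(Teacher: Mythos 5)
Your overall strategy --- rewriting the CCE gap of $\bar x^T$ as the realized external regret $\hat R_i^T$ against the sampled plans, then splitting off a martingale term --- is sound and is essentially the Hart--Mas-Colell route; your identification of the CCE gap with $\frac{1}{T}\hat R_i^T$ matches the paper's Theorem~\ref{thm:convergence-to-cce} (your $\hat R_i^T$ is the paper's $\tilde R_i^T$), and your treatment of $D_i^T$ via realization equivalence, Azuma--Hoeffding and Borel--Cantelli is correct. The gap is in the middle step, where you claim that the behavioral-strategy regret $R_i^T$ is bounded \emph{deterministically} by the counterfactual regret decomposition of Zinkevich et al. That decomposition bounds $R_i^T$ by a sum of local \emph{counterfactual} regrets, and those are $O(\sqrt{T})$ only if the regret minimizer at each infoset $I$ is fed the counterfactual utility vector, i.e., the value of each $a\in A(I)$ computed by following $\pi_i^t$ in the subtree below $(I,a)$. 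CFR-S does not do this: as specified for Algorithm~\ref{alg:cfr-s} and formalized in Equation~\eqref{eq:parametrized_util}, \textsc{Update} propagates utilities through player $i$'s own subtree according to the \emph{sampled} plan $\sigma_i^t$, so the local regret minimizers control the sampled (laminar) regrets $\hat R_I^T$, not the counterfactual ones. The discrepancy between the two is itself a sum of conditionally mean-zero terms at every infoset, so your single top-level martingale $D_i^T$ does not absorb all of the sampling noise; the bound on $R_i^T$ can only be almost-sure, and requires a stochastic argument inside the tree that your proof omits.

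The paper sidesteps this by never introducing $R_i^T$: it decomposes the realized regret $\tilde R_i^T$ directly into the local sampled regrets via a laminar-regret argument (Lemmas~\ref{lemma:cumulative} and~\ref{lemma:sum}) and then invokes the almost-sure RM guarantee at each infoset, which is exactly where the per-infoset martingale argument is hidden. Your proof becomes correct either by adopting that decomposition, or by keeping your two-term split and proving $\frac{1}{T}R_i^T\to 0$ almost surely via per-infoset Azuma bounds relating the sampled local utilities to the counterfactual ones (or by modifying CFR-S so that \textsc{Update} propagates $\pi_i^t$ rather than $\sigma_i^t$, in which case your deterministic claim would hold for that modified algorithm).
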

Moreover, the cumulative regret grows as $O(T^{-1/2})$.
This result is in line with the approach of \citet{hart2000simple} in normal-form games.
Despite its simplicity, we show (see Section~\ref{sec:exp_eval} for an experimental evaluation) that it is possible to achieve better perfomances via a smarter reconstruction technique that keeps CFR deterministic, avoiding any sampling step.
	\section{CFR with joint distribution reconstruction (CFR-Jr)}\label{sec:cfr_jr}

We design a new method---called \emph{CFR with joint distribution reconstruction} (CFR-Jr)---to enhance CFR so as to approach the set of CCEs in multi-player, general-sum EFGs.
Differently from the naive CFR-S algorithm, CFR-Jr does not sample normal-form plans, thus avoiding any stochasticity.


The main idea behind CFR-Jr is to keep track of the average joint probability distribution $\bar x^T \in \X$ arising from the regret-minimizing strategies built with CFR.
Formally, $\bar x^T = \frac{1}{T}\sum_{t = 1}^{T} x^t$, where $x^t \in \X$ is the joint probability distribution defined as the product of the players' normal-form strategies at iteration $t$.
At each $t$, CFR-Jr computes $\pi_{i}^t$ with CFR's update rules, and then constructs a strategy $x_i^t \in \X_i$ which is realization equivalent (\emph{i.e.}, it induces the same probability distribution on the terminal nodes, see~\cite{shoham2008multiagent} for a formal definition) to $\pi_i^t$.
We do this efficiently by directly working on the game tree, without resorting to the normal-form representation.
Strategies $x_i^t$ are then employed to compute $x^t$.
The pseudocode of CFR-Jr is provided in Appendix~\ref{appendix:cfr-jr}.

\begin{figure}
	\begin{algorithm}[H]
		\centering
		\scriptsize
		\caption{\texttt{Reconstruct $x_i$ from $\pi_i$}}
		\begin{algorithmic}[1]
			\Function{\textsc{NF-Strategy-Reconstruction}}{$\pi_i$}
			\State $\mathbf{X} \gets \varnothing$\Comment{$\mathbf{X}$ is a dictionary defining $x_i$}
			\State $\omega_z \gets \rho^{\pi_i}_z \,\,\,\, \forall z \in Z $\label{line:recon_init}
			\While{$\omega > 0$}
			\State $\bar {\sigma}_i \gets \argmax_{\sigma_i \in \Sigma_i} \min_{z \in Z(\sigma_i)} \omega_z$\label{line:recon_maxmin}
			\State $\bar{\omega} \gets \min_{z \in Z(\bar{\sigma}_i)} \omega_i(z)$
			\State $\mathbf{X} \gets \mathbf{X} \cup ( \bar \sigma_i, \bar \omega )$
			\State $\omega \gets \omega - \bar{\omega} \, \rho^{\bar{\sigma}_i}$
			\EndWhile
			\Return $x_i$ built from the pairs in $\mathbf{X}$
			\EndFunction
		\end{algorithmic}
		\label{alg:algo_recon}
	\end{algorithm}
\end{figure}

Algorithm~\ref{alg:algo_recon} shows a polynomial-time procedure to compute a normal-form strategy $x_i \in \X_i$ realization equivalent to a given behavioral strategy $\pi_i$.
The algorithm maintains a vector ${\omega}$ which is initialized with the probabilities of reaching the terminal nodes by playing $\pi_i$ 
(Line~\ref{line:recon_init}), and it works by iteratively assigning probability to normal-form plans so as to induce the same distribution of $\omega$ over $Z$.
\footnote{Vector $\omega$ is a \emph{realization-form strategy}, as defined by~\citet[Definition 2]{farina2018exante}.}
In order for this to work, at each iteration, the algorithm must pick a normal-form plan $\bar \sigma_i \in \Sigma_i$ which maximizes the minimum (remaining) probability $\omega_z$ over the terminal nodes $z \in Z(\bar \sigma_i)$ reachable when playing $\bar \sigma_i$ (Line~\ref{line:recon_maxmin}).
Then, the probabilities $\omega_z$ for $z \in Z(\bar \sigma_i)$ are decreased by the minimum (remaining) probability $\bar \omega$ corresponding to $\bar \sigma_i$, and $\bar \sigma_i$ is assigned probability $\bar \omega$ in $x_i$.
The algorithm terminates when the vector $\omega$ is zeroed, returning a normal-form strategy $x_i$ realization equivalent to $\pi_i$.
This is formally stated by the following result, which also provides a polynomial (in the size of the game tree) upper bound on the running time of the algorithm and on the support size of the returned normal-form strategy $x_i$.~\footnote{Given a normal-form strategy $x_i \in \X_i$, its \emph{support} is defined as the set of $\sigma_i \in \Sigma_i$ such that $x_i(\sigma_i) > 0$.}

\begin{restatable}{theorem}{thmAlgoRecon}\label{thm:corr_algo_rec}
	Algorithm~\ref{alg:algo_recon} outputs a normal-form strategy $x_i \in \X_i$ realization equivalent to a given behavioral strategy $\pi_i$, and it runs in time $O(|Z|^2)$. Moreover, $x_i$ has support size at most $|Z|$.
\end{restatable}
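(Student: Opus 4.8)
The plan is to read Algorithm~\ref{alg:algo_recon} as a greedy decomposition of the realization vector $\rho^{\pi_i}$ into a convex combination of the realization vectors $\rho^{\sigma_i}$ of pure normal-form plans. Since for a pure plan $\rho^{\sigma_i}$ is the $0/1$ indicator of $Z(\sigma_i)$, realization equivalence of the returned $x_i$ amounts to the identity $\sum_{\sigma_i}x_i(\sigma_i)\,\rho^{\sigma_i}_z=\rho^{\pi_i}_z$ for every $z\in Z$. The vector $\omega$ records, at each point, the residual $\rho^{\pi_i}_z-\sum_{\sigma_i} x_i(\sigma_i)\rho^{\sigma_i}_z$ of the mass still to be allocated, so the whole statement reduces to showing that the loop (i) makes legitimate moves (keeps $\omega\ge 0$ and assigns positive weights), (ii) terminates with $\omega=0$ after at most $|Z|$ iterations, and (iii) that each iteration runs in $O(|Z|)$ time.

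First I would set up the loop invariant that $\omega$ is always a \emph{feasible} realization-form vector, i.e., a non-negative vector over $Z$ satisfying the sequence-form flow-consistency constraints of player~$i$'s treeplex (constant on terminals sharing the same $i$-sequence; at every reached infoset the incoming mass splits among the actions), with mass $M\le 1$ at the empty sequence. This holds initially with $M=1$ because $\rho^{\pi_i}$ is induced by a behavioral strategy. For the inductive step I use that $\rho^{\bar\sigma_i}$ also satisfies these (homogeneous) flow constraints with unit root mass, so by linearity $\omega-\bar\omega\,\rho^{\bar\sigma_i}$ still satisfies them with root mass $M-\bar\omega$; non-negativity is preserved exactly because $\bar\omega=\min_{z\in Z(\bar\sigma_i)}\omega_z$ and coordinates outside $Z(\bar\sigma_i)$ are untouched. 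Two consequences follow. Whenever $\omega\neq 0$, feasibility with $M>0$ lets me trace a plan following positive residual mass at every infoset, so the maximin value, and hence $\bar\omega$, is strictly positive: progress is guaranteed. Moreover the coordinate achieving the minimum is zeroed, and an already-zeroed coordinate cannot lie in $Z(\bar\sigma_i)$ (else the maximin value would be $0$), so the set of zeroed coordinates grows by at least one each iteration. This bounds the number of iterations, and hence the support of $x_i$, by $|Z|$; it forces termination with $\omega=0$ (whence realization equivalence); and, since the root mass drops by exactly $\bar\omega$ each step from $1$ to $0$, it gives $\sum\bar\omega=1$, so $x_i\in\X_i$.

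The remaining ingredient is an efficient evaluation of the maximin in Line~\ref{line:recon_maxmin}, which ranges over exponentially many plans. I would compute it by a single bottom-up pass that, node by node, sets $f(z)=\omega_z$ at terminals, $f(h)=\max_{a}f(h\cdot a)$ where $P(h)=i$ (the plan commits to one action), and $f(h)=\min_{a}f(h\cdot a)$ where $P(h)\neq i$ (every continuation of the opponents and chance is reachable, so it enters the $\min$ defining $Z(\sigma_i)$); reading off the maximizing actions yields $\bar\sigma_i$ and $\bar\omega=f(\text{root})$. The delicate point, and what I expect to be the main obstacle, is that a valid plan must commit to the \emph{same} action at all nodes of an infoset, so this pass must actually be organized over player~$i$'s treeplex rather than over raw tree nodes; here perfect recall is essential, as it guarantees that the infosets of $i$ form a tree, that each is entered through a unique $i$-sequence, and hence that the per-infoset $\max$ is taken consistently across all member nodes. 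Granting this, each pass costs $O(|Z|)$, and together with the $|Z|$ bound on the number of iterations and the $O(|Z|)$ cost of the update $\omega\gets\omega-\bar\omega\,\rho^{\bar\sigma_i}$, the algorithm runs in $O(|Z|^2)$.
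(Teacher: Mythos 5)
Your proposal is correct and its overall architecture coincides with the paper's: both read the algorithm as a greedy decomposition of $\rho^{\pi_i}$ into a non-negative combination of pure-plan indicator vectors, both get termination from ``each iteration zeroes at least one new terminal while keeping $\omega\ge 0$,'' both bound the support by the iteration count, and both compute the maximin plan by a bottom-up pass over player $i$'s infosets (which is exactly how the paper justifies the $O(\max\{|\I_i|,|Z|\})$ per-iteration cost and then $O(|Z|^2)$ overall). Where you genuinely diverge is in the key lemma guaranteeing progress. The paper proves, by induction on the depth of the game tree (its Lemma~\ref{induction_lemma}), that for the recursively built $\bar\sigma_i$ one has $\max_{a}\min_{z\in Z(\bar\sigma_i,I,a)}\omega_z=0$ at an infoset if and only if $\omega_z=0$ for all $z\in Z(I)$, and then specializes this to the root to conclude $\bar\omega>0$ whenever $\omega\neq 0$. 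You instead maintain the invariant that $\omega$ is a feasible realization-form vector on player $i$'s treeplex with root mass $1-\sum_k\bar\omega^k$, preserved by linearity of the flow constraints, and extract $\bar\omega>0$ by tracing a positive-mass action at every reached infoset (unreached infosets being irrelevant to $Z(\bar\sigma_i)$). The two arguments are logically equivalent under perfect recall, but yours buys two things the paper leaves implicit: it makes transparent \emph{why} perfect recall is needed (the residual must remain constant across terminals sharing an $i$-sequence, and the infoset-level $\max$ must be consistent across member nodes), and it yields for free that $\sum_k\bar\omega^k=1$, i.e., that the returned $x_i$ is actually a probability distribution in $\X_i$ rather than merely a non-negative vector realizing $\rho^{\pi_i}$. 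The paper's explicit induction, on the other hand, is more self-contained in that it never needs to introduce the sequence-form polytope. I see no gap in your argument.
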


Intuitively, the result in Theorem~\ref{thm:corr_algo_rec} (its full proof is in Appendix~\ref{appendix:cfr-jr}) relies on the crucial observation that, at each iteration, there is at least one terminal node $z \in Z$ whose corresponding probability $\omega_z$ is zeroed during that iteration. 
The algorithm is guaranteed to terminate since each $\omega_z$ is never negative,
%
which is the case given how the normal-form plans are selected (Line~\ref{line:recon_maxmin}), 
and since the game has perfect recall.
This guarantees that the algorithm eventually terminates in at most $|Z|$ iterations.

Finally, the following theorem (whose full proof is in Appendix~\ref{appendix:cfr-jr}) proves that the average distribution $\bar x^T$ obtained with CFR-Jr approaches the set of CCEs.
Formally:

\begin{restatable}{theorem}{thmCFRJr}\label{thm:eps_cce_avg_prod}
	If $\frac{1}{T}R_i^T \leq \varepsilon$ for each player $i \in \mathcal{P}$, then $\bar x^T$ obtained with CFR-Jr is an $\varepsilon$-CCE.
\end{restatable}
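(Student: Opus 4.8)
The plan is to verify the $\varepsilon$-CCE inequality of Definition~\ref{def:cce} directly, by rewriting both of its sides as time-averages of the per-iteration utilities and then invoking the regret hypothesis. Fix a player $i \in \pl$ and an arbitrary deviation $\sigma_i' \in \Sigma_i$; since the defining inequalities of an $\varepsilon$-CCE range over pure plans, it suffices to bound the single quantity $\sum_{\sigma \in \Sigma} \bar x^T(\sigma)\left(u_i(\sigma_i',\sigma_{-i}) - u_i(\sigma)\right)$ by $\varepsilon$. Recall that $\bar x^T = \frac{1}{T}\sum_{t=1}^T x^t$ and that each $x^t = \bigotimes_{j \in \pl} x_j^t$ is a product of the players' normal-form strategies, where $x_i^t \in \X_i$ is the realization-equivalent reconstruction of $\pi_i^t$ produced by Algorithm~\ref{alg:algo_recon}.

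First I would handle the ``staying'' term. By linearity and the product structure of each $x^t$,
\[
\sum_{\sigma} \bar x^T(\sigma)\, u_i(\sigma) = \frac{1}{T}\sum_{t=1}^T \sum_{\sigma} x^t(\sigma)\, u_i(\sigma) = \frac{1}{T}\sum_{t=1}^T u_i^t(x_i^t),
\]
where $u_i^t(x_i^t) = \expe_{\sigma \sim x^t}[u_i(\sigma)]$ is exactly the expected utility player $i$ obtains at iteration $t$ against the opponents' play $x_{-i}^t$. For the ``deviating'' term, the key observation is that $u_i(\sigma_i',\sigma_{-i})$ depends on $\sigma$ only through $\sigma_{-i}$, so the sum collapses onto the marginal of $\bar x^T$ over $\Sigma_{-i}$. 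Marginalizing each product $x^t$ over player $i$'s plans yields $\bar x^T_{-i} = \frac{1}{T}\sum_{t} x_{-i}^t$, and hence
\[
\sum_{\sigma} \bar x^T(\sigma)\, u_i(\sigma_i',\sigma_{-i}) = \sum_{\sigma_{-i}} \bar x^T_{-i}(\sigma_{-i})\, u_i(\sigma_i',\sigma_{-i}) = \frac{1}{T}\sum_{t=1}^T u_i^t(\sigma_i').
\]

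With both terms expressed as time-averages, their difference is precisely $\frac{1}{T}\left(\sum_t u_i^t(\sigma_i') - \sum_t u_i^t(x_i^t)\right)$. Because the pure plan $\sigma_i'$ is a point mass in $\X_i$, it is dominated by the maximizer in the definition~\eqref{eq:regret} of $R_i^T$, so this difference is at most $\frac{1}{T}R_i^T \le \varepsilon$ by hypothesis. Since $i$ and $\sigma_i'$ were arbitrary, $\bar x^T$ satisfies every constraint of Definition~\ref{def:cce} up to $\varepsilon$, i.e., it is an $\varepsilon$-CCE.

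The step I expect to be the crux---and the one that explains why CFR-Jr succeeds where the product of the averaged marginal strategies fails (cf.\ the example of Figure~\ref{fig:game_es})---is the rewriting of the staying term as $\frac{1}{T}\sum_t u_i^t(x_i^t)$. This works only because $\bar x^T$ averages the per-iteration \emph{joint} products $x^t$, so the staying payoff is the average of realized per-round payoffs rather than the payoff of the product of the averaged marginals; the two differ exactly by the cross-iteration correlation that the failing approach discards. A secondary point to justify carefully is that the per-round utilities $u_i^t(x_i^t)$ appearing above coincide with the utilities whose regret CFR actually controls: this is guaranteed by the realization equivalence of $x_i^t$ and $\pi_i^t$ (Theorem~\ref{thm:corr_algo_rec}), which ensures that $x^t$ and the behavioral profile $(\pi_j^t)_{j \in \pl}$ induce the same distribution over $Z$ and thus the same expected payoffs, so that the assumption $\frac{1}{T}R_i^T \le \varepsilon$ applies to exactly the quantities used in the argument.
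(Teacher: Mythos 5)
Your proposal is correct and follows essentially the same route as the paper's proof: both rely on the product structure of each per-iteration joint distribution $x^t$ to rewrite the deviation term via marginalization (the paper's $\sum_{\sigma_i} x_i^t(\sigma_i)=1$ step) and the staying term as the average of realized per-round expected payoffs, then invoke the hypothesis $\frac{1}{T}R_i^T \le \varepsilon$. You merely run the chain of identities in the opposite direction (from the CCE inequality down to the regret bound) and are somewhat more explicit about the role of realization equivalence, which the paper leaves implicit.
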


\begin{figure*}[t]
	\centering
	\includegraphics[width=.98\linewidth]{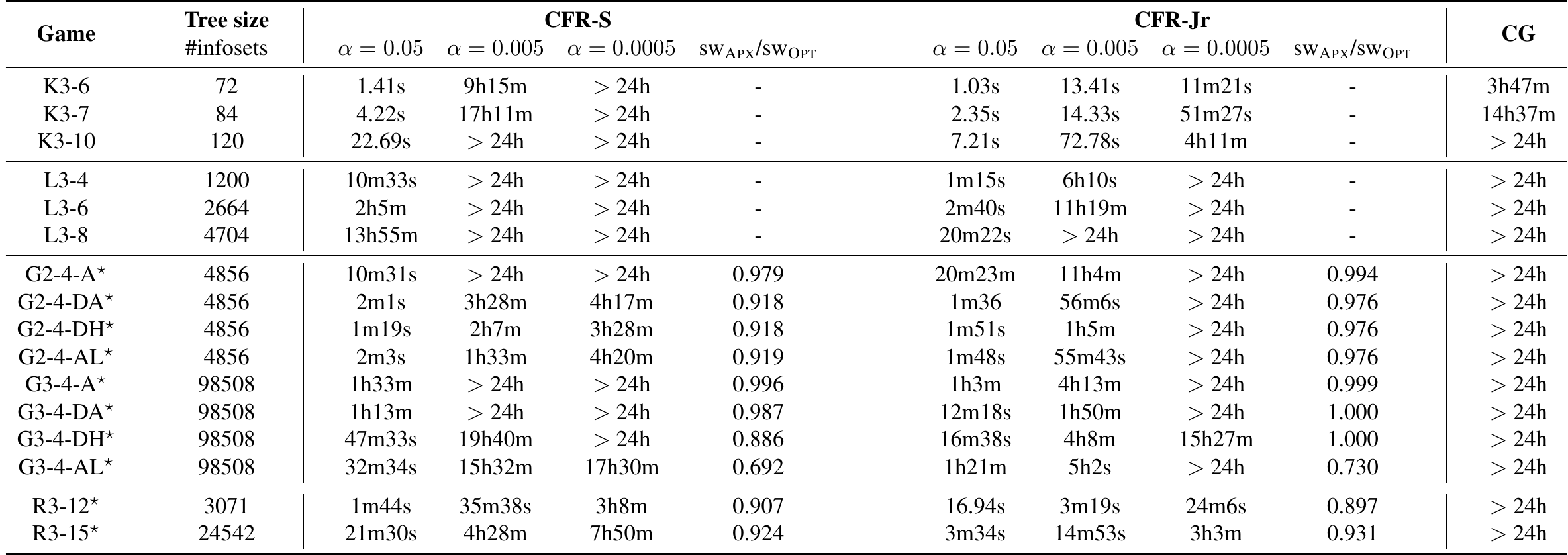}
	\caption{Comparison between the run time and the social welfare of CFR-S, CFR-Jr (for various levels of accuracy $\alpha$), and the CG algorithm. General-sum instances are marked with $^\star$. Results of CFR-S are averaged over 50 runs. We generated 20 instances for each R$p$-$d$ family.}
	\label{tab:convergence}
\end{figure*}

%
%
This is a direct consequence of the connection between regret-minimizing procedures and CCEs, and of the fact that $\bar x^T$ is obtained by averaging the products of normal-form strategies which are equivalent to regret-minimizing behavioral strategies obtained with CFR.
%

	\section{Experimental evaluation}\label{sec:exp_eval}

We experimentally evaluate CFR-Jr, comparing its performance with that of CFR-S, CFR, and the  state-of-the-art algorithm for computing optimal CCEs (denoted by CG)~\cite{celli2018computing}.\footnote{
	The only other known algorithm to compute a CCE is by~\citet{huang2008computing} (see also~\cite{jiang2015polynomial} for an amended version). 
	However, this algorithm relies on the ellipsoid method, which is inefficient in practice~\cite{grotschel1981ellipsoid}.
}
This algorithm is a variation of the simplex method employing a column generation technique based on a MILP pricing oracle (we use the GUROBI 8.0 MILP solver).
Notice that directly applying RM on the normal form is not feasible, as $|\Sigma| > 10^{20}$ even for the smallest instances.
%
%
Further results are in Appendix~\ref{sec:appendix_exp}.

\begin{figure}
	\begin{minipage}{.20\textwidth}
		\centering
		\includegraphics[width=1.1\textwidth]{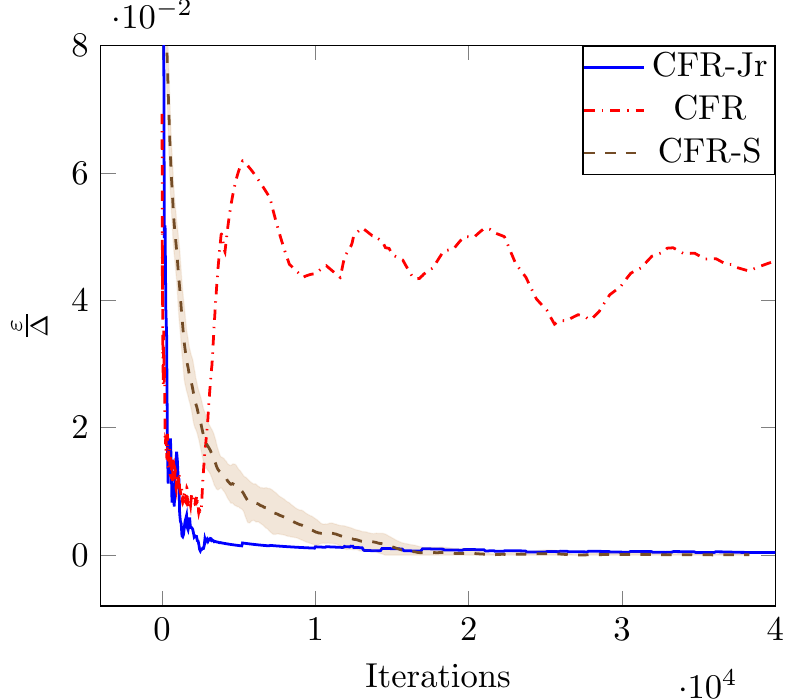}
	\end{minipage}
		\hspace{0.3cm}
	\begin{minipage}{.20\textwidth}
		\vspace{.2cm}
		\centering
		\includegraphics[width=1.2\textwidth]{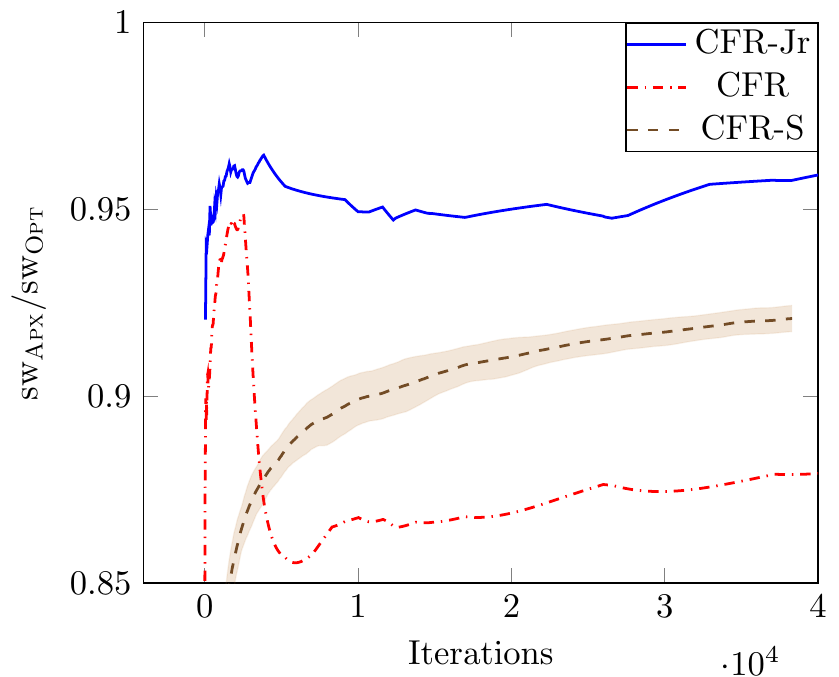}
	\end{minipage}
	\caption{\emph{Left}: Convergence rate attained in G2-4-DA. \emph{Right}: Social welfare attained in G2-4-DA.}
	\label{fig:goof}
\end{figure}

\paragraph{Setup}
We conduct experiments on parametric instances of 3-player Kuhn poker games~\cite{kuhn1950simplified},  3-player Leduc hold'em poker games~\cite{southey2005bayes}, 2/3-player Goofspiel games~\cite{ross1971goofspiel}, and some randomly generated general-sum EFGs.
The 2-player zero-sum versions of these games are standard benchmarks for imperfect-information game solving.
%
In Appendix~\ref{sec:appendix_exp}, we describe their multi-player, general-sum counterparts.
Each instance is identified by parameters $p$ and $r$, which denote, respectively, the number of players and the number of ranks in the deck of cards. 
For example, a 3-player Kuhn game with rank 4 is denoted by Kuhn3-4, or K3-4.
We use different tie-breaking rules for the Goofspiel instances (denoted by A, DA, DH, AL---see Appendix~\ref{sec:appendix_exp}). 
%
%
R$p$-$d$ denotes a random game with $p$ players, and depth of the game tree $d$.

%
%
\paragraph{Convergence}
We evaluate the run time required by the algorithms to find an approximate CCE. The results are provided in Table~\ref{tab:convergence}, which reports
the run time needed by CFR-S, CFR-Jr to achieve solutions with different levels of accuracy, and the time needed by CG for reaching an equilibrium.\footnote{Table~\ref{tab:convergence} only accounts for algorithms with guaranteed convergence to a CCE (recall that CFR is not guaranteed to converge in multi-player, general-sum EFGs). The original version of the CG algorithm computes an optimal CCE. For our tests, we modified it to stop when a feasible solution is reached.}
The accuracy $\alpha$ of the $\varepsilon$-CCEs reached is defined as $\alpha= \frac{\varepsilon}{\Delta}$.
CFR-Jr consistently outperforms both CFR-S and CG, being orders of magnitude faster.
Figure~\ref{fig:goof}, on the left, shows the performance of CFR-Jr, CFR-S (mean plus/minus standard deviation), and CFR over G2-4-DA in terms of $\varepsilon/\Delta$ approximation.
CFR performs dramatically worse than CFR-S and CFR-Jr.

\paragraph{Social welfare}
Table~\ref{tab:convergence} shows, for the general-sum games,  the social welfare approximation ratio between the social welfare of the solutions returned by the algorithms (sw$_{\textsc{Apx}}$) and the optimal social welfare (sw$_{\textsc{Opt}}$).
The social welfare guaranteed by CFR-Jr is always nearly optimal, which makes it a good heuristic to compute optimal CCEs. 
Reaching a \emph{socially good} equilibrium is crucial, in practice, to make correlation credible.
Figure~\ref{fig:goof}, on the right, shows the performance of CFR-Jr, CFR-S (mean plus/minus standard deviation), and CFR over G2-4-DA in terms of social welfare approximation ratio.
Also in this case, CFR performs worse than the other two algorithms.

	\section{Conclusions and future works}\label{sec:discussion}

In this paper, we proved that finding an optimal (\emph{i.e.}, social-welfare maximizing) CCE is not in Poly-\textsf{APX}, unless \textsf{P} = \textsf{NP}, in general-sum EFGs with two players and chance or with multiple players.
We proposed CFR-Jr as an appealing remedy to the conundrum of computing correlated strategies for multi-player, general-sum settings with game instances beyond toy-problems.
In the future, it would be interesting to further study how to approximate CCEs in other classes of structured games such as, \emph{e.g.}, polymatrix games and congestion games.
Moreover, a CCE strategy profile could be employed as a starting point to approximate tighter solution concepts that admit some form of correlation. 
This could be the case, \emph{e.g.}, of the TMECor~\cite{farina2018exante}, which is used to model collusive behaviors and interactions involving teams.
Finally, it would be interesting to further investigate whether it is possible to define regret-minimizing procedures for general EFGs leading to refinements of the CCEs, such as CEs and EFCEs. 
This begets new challenging problems in the study of how to minimize regret in structured games.
	
	
	
	
	
	\bibliographystyle{custom_arxiv}
	\bibliography{main}

\begin{thebibliography}{43}
\providecommand{\natexlab}[1]{#1}
\providecommand{\url}[1]{\texttt{#1}}
\expandafter\ifx\csname urlstyle\endcsname\relax
  \providecommand{\doi}[1]{doi: #1}\else
  \providecommand{\doi}{doi: \begingroup \urlstyle{rm}\Url}\fi

\bibitem[Aumann(1974)]{aumann1974subjectivity}
Aumann, R.~J.
\newblock Subjectivity and correlation in randomized strategies.
\newblock \emph{Journal of mathematical Economics}, 1\penalty0 (1):\penalty0
  67--96, 1974.

\bibitem[Ausiello et~al.(2012)Ausiello, Crescenzi, Gambosi, Kann,
  Marchetti-Spaccamela, and Protasi]{ausiello2012complexity}
Ausiello, G., Crescenzi, P., Gambosi, G., Kann, V., Marchetti-Spaccamela, A.,
  and Protasi, M.
\newblock \emph{Complexity and approximation: Combinatorial optimization
  problems and their approximability properties}.
\newblock Springer Science \& Business Media, 2012.

\bibitem[Barman \& Ligett(2015)Barman and Ligett]{barman}
Barman, S. and Ligett, K.
\newblock Finding any nontrivial coarse correlated equilibrium is hard.
\newblock In \emph{{ACM} Conference on Economics and Computation ({EC})}, pp.\
  815--816, 2015.

\bibitem[Blackwell et~al.(1956)]{blackwell1956analog}
Blackwell, D. et~al.
\newblock An analog of the minimax theorem for vector payoffs.
\newblock \emph{Pacific Journal of Mathematics}, 6\penalty0 (1):\penalty0 1--8,
  1956.

\bibitem[Blum \& Monsour(2007)Blum and Monsour]{blum2007learning}
Blum, A. and Monsour, Y.
\newblock Learning, regret minimization, and equilibria.
\newblock 2007.

\bibitem[Bowling et~al.(2015)Bowling, Burch, Johanson, and
  Tammelin]{bowling2015heads}
Bowling, M., Burch, N., Johanson, M., and Tammelin, O.
\newblock Heads-up limit hold’em poker is solved.
\newblock \emph{Science}, 347\penalty0 (6218):\penalty0 145--149, 2015.

\bibitem[Brown \& Sandholm(2017)Brown and Sandholm]{brown2017safe}
Brown, N. and Sandholm, T.
\newblock Safe and nested subgame solving for imperfect-information games.
\newblock In \emph{Advances in Neural Information Processing Systems
  (NeurIPS)}, pp.\  689--699, 2017.

\bibitem[Brown \& Sandholm(2018{\natexlab{a}})Brown and
  Sandholm]{brown2018solving}
Brown, N. and Sandholm, T.
\newblock Solving imperfect-information games via discounted regret
  minimization.
\newblock \emph{arXiv preprint arXiv:1809.04040}, 2018{\natexlab{a}}.

\bibitem[Brown \& Sandholm(2018{\natexlab{b}})Brown and
  Sandholm]{brown2018superhuman}
Brown, N. and Sandholm, T.
\newblock Superhuman ai for heads-up no-limit poker: Libratus beats top
  professionals.
\newblock \emph{Science}, 359\penalty0 (6374):\penalty0 418--424,
  2018{\natexlab{b}}.

\bibitem[Brown et~al.(2018)Brown, Lerer, Gross, and Sandholm]{brown2018deep}
Brown, N., Lerer, A., Gross, S., and Sandholm, T.
\newblock Deep counterfactual regret minimization.
\newblock \emph{arXiv preprint arXiv:1811.00164}, 2018.

\bibitem[Burch et~al.(2018)Burch, Moravcik, and Schmid]{burch18revisiting}
Burch, N., Moravcik, M., and Schmid, M.
\newblock Revisiting {CFR+} and alternating updates.
\newblock \emph{CoRR}, abs/1810.11542, 2018.

\bibitem[Celli et~al.(2019)Celli, Coniglio, and Gatti]{celli2018computing}
Celli, A., Coniglio, S., and Gatti, N.
\newblock Computing optimal coarse correlated equilibria in sequential games.
\newblock In \emph{International Conference on Autonomous Agents and Multiagent
  Systems (AAMAS)}, 2019.

\bibitem[Cesa-Bianchi \& Lugosi(2006)Cesa-Bianchi and
  Lugosi]{cesa2006prediction}
Cesa-Bianchi, N. and Lugosi, G.
\newblock \emph{Prediction, learning, and games}.
\newblock Cambridge university press, 2006.

\bibitem[Farina et~al.(2018{\natexlab{a}})Farina, Celli, Gatti, and
  Sandholm]{farina2018exante}
Farina, G., Celli, A., Gatti, N., and Sandholm, T.
\newblock Ex ante coordination and collusion in zero-sum multi-player
  extensive-form games.
\newblock In \emph{Advances in Neural Information Processing Systems
  (NeurIPS)}, 2018{\natexlab{a}}.

\bibitem[Farina et~al.(2018{\natexlab{b}})Farina, Kroer, and
  Sandholm]{farina2018online}
Farina, G., Kroer, C., and Sandholm, T.
\newblock Online convex optimization for sequential decision processes and
  extensive-form games.
\newblock \emph{arXiv preprint arXiv:1809.03075}, 2018{\natexlab{b}}.

\bibitem[Forges(1993)]{forges1993}
Forges, F.
\newblock Five legitimate definitions of correlated equilibrium in games with
  incomplete information.
\newblock \emph{Theory and Decision}, 35\penalty0 (3):\penalty0 277--310, Nov
  1993.

\bibitem[Foster \& Vohra(1997)Foster and Vohra]{foster1997calibrated}
Foster, D.~P. and Vohra, R.~V.
\newblock Calibrated learning and correlated equilibrium.
\newblock \emph{Games and Economic Behavior}, 21\penalty0 (1-2):\penalty0 40,
  1997.

\bibitem[Garey \& Johnson(1979)Garey and Johnson]{garey1979computers}
Garey, M.~R. and Johnson, D.~S.
\newblock \emph{Computers and Intractability: A Guide to the Theory of
  {NP}-completeness}.
\newblock WH Freeman and Company, 1979.

\bibitem[Gibson(2013)]{gibson2013regret}
Gibson, R.
\newblock Regret minimization in non-zero-sum games with applications to
  building champion multiplayer computer poker agents.
\newblock \emph{CoRR}, abs/1305.0034, 2013.

\bibitem[Gr{\"o}tschel et~al.(1981)Gr{\"o}tschel, Lov{\'a}sz, and
  Schrijver]{grotschel1981ellipsoid}
Gr{\"o}tschel, M., Lov{\'a}sz, L., and Schrijver, A.
\newblock The ellipsoid method and its consequences in combinatorial
  optimization.
\newblock \emph{Combinatorica}, 1\penalty0 (2):\penalty0 169--197, 1981.

\bibitem[Hannan(1957)]{hannan1957approximation}
Hannan, J.
\newblock Approximation to bayes risk in repeated play.
\newblock \emph{Contributions to the Theory of Games}, 3:\penalty0 97--139,
  1957.

\bibitem[Hart \& Mas-Colell(2000)Hart and Mas-Colell]{hart2000simple}
Hart, S. and Mas-Colell, A.
\newblock A simple adaptive procedure leading to correlated equilibrium.
\newblock \emph{Econometrica}, 68\penalty0 (5):\penalty0 1127--1150, 2000.

\bibitem[Hartline et~al.(2015)Hartline, Syrgkanis, and Tardos]{hartline2015no}
Hartline, J., Syrgkanis, V., and Tardos, E.
\newblock No-regret learning in bayesian games.
\newblock In \emph{Advances in Neural Information Processing Systems
  (NeurIPS)}, pp.\  3061--3069, 2015.

\bibitem[Huang \& von Stengel(2008)Huang and von Stengel]{huang2008computing}
Huang, W. and von Stengel, B.
\newblock Computing an extensive-form correlated equilibrium in polynomial
  time.
\newblock In \emph{International Workshop on Internet and Network Economics},
  pp.\  506--513. Springer, 2008.

\bibitem[Jafari et~al.(2001)Jafari, Greenwald, Gondek, and Ercal]{jafari2001no}
Jafari, A., Greenwald, A., Gondek, D., and Ercal, G.
\newblock On no-regret learning, fictitious play, and nash equilibrium.
\newblock In \emph{ICML}, volume~1, pp.\  226--233, 2001.

\bibitem[Jiang \& Leyton-Brown(2015)Jiang and
  Leyton-Brown]{jiang2015polynomial}
Jiang, A.~X. and Leyton-Brown, K.
\newblock Polynomial-time computation of exact correlated equilibrium in
  compact games.
\newblock \emph{Games and Economic Behavior}, 91:\penalty0 347--359, 2015.

\bibitem[Kuhn(1950)]{kuhn1950simplified}
Kuhn, H.~W.
\newblock A simplified two-person poker.
\newblock \emph{Contributions to the Theory of Games}, 1:\penalty0 97--103,
  1950.

\bibitem[Morav{\v{c}}{\'\i}k et~al.(2017)Morav{\v{c}}{\'\i}k, Schmid, Burch,
  Lis{\`y}, Morrill, Bard, Davis, Waugh, Johanson, and
  Bowling]{moravvcik2017deepstack}
Morav{\v{c}}{\'\i}k, M., Schmid, M., Burch, N., Lis{\`y}, V., Morrill, D.,
  Bard, N., Davis, T., Waugh, K., Johanson, M., and Bowling, M.
\newblock Deepstack: Expert-level artificial intelligence in heads-up no-limit
  poker.
\newblock \emph{Science}, 356\penalty0 (6337):\penalty0 508--513, 2017.

\bibitem[Moulin \& Vial(1978)Moulin and Vial]{moulin1978}
Moulin, H. and Vial, J.-P.
\newblock Strategically zero-sum games: the class of games whose completely
  mixed equilibria cannot be improved upon.
\newblock \emph{International Journal of Game Theory}, 7\penalty0 (3):\penalty0
  201--221, 1978.

\bibitem[Myerson(1986)]{myerson1986}
Myerson, R.~B.
\newblock Multistage games with communication.
\newblock \emph{Econometrica}, 54\penalty0 (2):\penalty0 323--358, 1986.

\bibitem[Nash(1951)]{nash1951non}
Nash, J.
\newblock Non-cooperative games.
\newblock \emph{Annals of mathematics}, pp.\  286--295, 1951.

\bibitem[Papadimitriou \& Roughgarden(2008)Papadimitriou and
  Roughgarden]{papadimitriou2008}
Papadimitriou, C.~H. and Roughgarden, T.
\newblock Computing correlated equilibria in multi-player games.
\newblock \emph{Journal of the ACM (JACM)}, 55\penalty0 (3):\penalty0 14, 2008.

\bibitem[Risk \& Szafron(2010)Risk and Szafron]{risk2010using}
Risk, N.~A. and Szafron, D.
\newblock Using counterfactual regret minimization to create competitive
  multiplayer poker agents.
\newblock In \emph{International Conference on Autonomous Agents and Multiagent
  Systems (AAMAS)}, pp.\  159--166, 2010.

\bibitem[Ross(1971)]{ross1971goofspiel}
Ross, S.~M.
\newblock Goofspiel—the game of pure strategy.
\newblock \emph{Journal of Applied Probability}, 8\penalty0 (3):\penalty0
  621--625, 1971.

\bibitem[Roughgarden(2009)]{roughgarden2009intrinsic}
Roughgarden, T.
\newblock Intrinsic robustness of the price of anarchy.
\newblock In \emph{ACM symposium on Theory of computing (STOC)}, pp.\
  513--522, 2009.

\bibitem[Shoham \& Leyton-Brown(2008)Shoham and
  Leyton-Brown]{shoham2008multiagent}
Shoham, Y. and Leyton-Brown, K.
\newblock \emph{Multiagent systems: Algorithmic, game-theoretic, and logical
  foundations}.
\newblock Cambridge University Press, 2008.

\bibitem[Southey et~al.(2005)Southey, Bowling, Larson, Piccione, Burch,
  Billings, and Rayner]{southey2005bayes}
Southey, F., Bowling, M.~H., Larson, B., Piccione, C., Burch, N., Billings, D.,
  and Rayner, D.~C.
\newblock Bayes' bluff: Opponent modelling in poker.
\newblock In \emph{Conference on Uncertainty in Artificial Intelligence (UAI)},
  pp.\  550--558, 2005.

\bibitem[Tammelin(2014)]{tammelin2014solving}
Tammelin, O.
\newblock Solving large imperfect information games using cfr+.
\newblock \emph{arXiv preprint arXiv:1407.5042}, 2014.

\bibitem[Tammelin et~al.(2015)Tammelin, Burch, Johanson, and
  Bowling]{tammelin2015solving}
Tammelin, O., Burch, N., Johanson, M., and Bowling, M.
\newblock Solving heads-up limit texas hold'em.
\newblock In \emph{International Joint Conferences on Artificial Intelligence
  (IJCAI)}, pp.\  645--652, 2015.

\bibitem[von Stengel \& Forges(2008)von Stengel and Forges]{von2008extensive}
von Stengel, B. and Forges, F.
\newblock Extensive-form correlated equilibrium: Definition and computational
  complexity.
\newblock \emph{Mathematics of Operations Research}, 33\penalty0 (4):\penalty0
  1002--1022, 2008.

\bibitem[Waugh(2009)]{waugh2009abstraction}
Waugh, K.
\newblock Abstraction in large extensive games.
\newblock 2009.

\bibitem[Zinkevich(2003)]{zinkevich2003online}
Zinkevich, M.
\newblock Online convex programming and generalized infinitesimal gradient
  ascent.
\newblock In \emph{International Conference on Machine Learning (ICML)}, pp.\
  928--936, 2003.

\bibitem[Zinkevich et~al.(2008)Zinkevich, Johanson, Bowling, and
  Piccione]{zinkevich2008regret}
Zinkevich, M., Johanson, M., Bowling, M., and Piccione, C.
\newblock Regret minimization in games with incomplete information.
\newblock In \emph{Advances in Neural Information Processing Systems
  (NeurIPS)}, pp.\  1729--1736, 2008.

\end{thebibliography}
	
	\clearpage
	\appendix

\section{Discussion of Other Solution Concepts Based on Correlation}\label{sec:appendix_eq}
The classical notion of correlated equilibrium is the one introduced by~\citet{aumann1974subjectivity} for the normal form. Its definition for EFGs is as follows, and it employs the equivalent normal form of the game.
\begin{definition}\label{def:ce}
	A \emph{correlated equilibrium (CE)} of an EFG is a probability distribution $x^\ast\in\X$ such that, for every $i\in \pl$, and for every $\sigma_i,\sigma_i'\in \Sigma_i$, it holds:
	\[
	\sum_{\sigma_{-i}\in\Sigma_{-i}}x^\ast(\sigma_i,\sigma_{-i})\left(u_i(\sigma_i,\sigma_{-i})-u_i(\sigma_i',\sigma_{-i})\right)\geq 0.
	\]
\end{definition}
A CE can be interpreted in terms of a mediator who, \emph{ex ante} the play, draws the joint normal-form plan $\sigma^\ast\in\Sigma$ according to a publicly known $x^\ast\in\X$, and privately communicates each \emph{recommendation} $\sigma_i^\ast$ to the corresponding player.
After observing their recommended plan, each player decides whether to follow it or not. 

CCEs (Definition~\ref{def:cce}) differ from CEs in that a CCE only requires that following the suggested plan is a best response in expectation, before the recommended plan is actually revealed. 
In both CE and CCE, the entire vector of recommendations $\sigma^\ast$, specifying a move for each infoset, is computed before the playing phase of the game (as opposed to other solution concepts involving communication~\cite{forges1993,myerson1986}).
%

Specifically for EFGs, \citet{von2008extensive} introduced the notion of \emph{extensive-form correlated equilibrium} (EFCE). In this solution concept, each recommended action is assumed to be in a sealed envelope and is revealed only when the player reaches the relevant infoset (\emph{i.e}., the infoset where she can make that move).
Therefore, EFCEs require recommendations to be delivered during game execution, which makes them more demanding in terms of communication requirements than CEs and CCEs.
The size of the signal that has to be sampled is the same in all the three solution concepts, and it has polynomial size (one action for each infoset).
The following relation holds between the sets of equilibria described above: CE $\subseteq$ EFCE $\subseteq$ CCE, see~\cite{von2008extensive} for further details.

\section{Omitted Proofs for Inapproximability Results}\label{sec:appendix_inapx}

\thmInapxA*

\begin{proof}
	We provide a reduction from \textsc{Sat}.
	Given a \textsc{Sat} instance $(C,V)$, we build a two-player EFG with Nature $\Gamma_\varepsilon(C,V)$ with the following structure:
	\begin{itemize}[nolistsep,itemsep=0mm]
		\item The game starts in $h^\varnothing \in I^\varnothing$, where player 1 chooses an action between $a^\textsc{I}$ and $a^\textsc{O}$. In the first case, the game goes on with $h^\varnothing \cdot a^\textsc{I}=h^\textsc{N}$. Otherwise, the game ends with $u_1(z) = 1$ and $u_2(z) = -1 + \varepsilon$.
		\item At state $h^\textsc{N}$, Nature selects an action among $\{ a^{\phi} \mid \phi \in C \}$ uniformly at random, with $ h^\textsc{N} \cdot a^{\phi} = h^{\phi}$.
		\item Each state $h^{\phi}$ constitutes a player 1's infoset $I^{\phi}$. 
		At $I^{\phi}$, player 1 chooses an action in $\{ a^{\phi,l} \mid l \in \phi \}$, where $l$ denotes a literal in $\phi$. 
		Then, $h^{\phi} \cdot a^{\phi,l} = h^{\phi, l}$. 
		\item All states $h^{\phi, l}$ such that $l = v$ or $l = \bar v$ for some $v \in V$ belong to the same player 2's infoset $I^v$. At $I^v$, player 2 has two actions available, namely $a^{v}$ and $a^{\bar v}$.
		\item Then, the game ends and players' payoffs $u_1(z)= u_2(z)$ are equal to 1 if and only if $z = h^{\phi,l} \cdot a^{l}$, while they are 0 otherwise. 
	\end{itemize}
	
	Intuitively, each of the $2^{|V|}$ player 2's plans corresponds to a truth assignment $\tau$ where variable $v \in V$ is set to \textsc{true} (resp., \textsc{false}) if $a^v$ (resp., $a^{\bar v}$) is played at $I^v$. 
	Moreover, a player 1's plan determines whether the game is played ($a^\textsc{I}$) or not ($a^\textsc{O}$) and, in the first case, it selects one literal for each clause $\phi \in C$ (corresponding to the action played at infoset $I^\phi$).
	%
	If player 1 plays $a^\textsc{I}$, Nature chooses a clause $\phi \in C$ uniformly at random, and, then, the players' payoffs are 1 if and only if player 1 selected a literal of $\phi$ evaluating to \textsc{true} under $\tau$.
	Thus, in this case, players' expected payoffs are equal to the number of literals selected by player 1 evaluating to \textsc{true} under $\tau$, divided by the number of clauses $|C|$.
	As a result, if \textsc{Sat} is satisfiable, then there exists a joint plan where players' expected payoffs are equal to 1.
	It is sufficient that player 2 plays the plan associated to a satisfying truth assignment $\tau$, while player 2 selects a literal evaluating to \textsc{true} under $\tau$ for each clause.
	This is also a CCE with maximum social welfare equal to 2, as it provides the players with their maximum expected payoffs.
	Instead, if \textsc{Sat} is not satisfiable, then any CCE must recommend player 1 to play $a^\textsc{O}$ at $I^\varnothing$, otherwise her expected payoff would be strictly less than 1 and she would have an incentive to deviate to action $a^\textsc{O}$, reaching a payoff of 1.
	Hence, in this case, any CCE has social welfare $\varepsilon$.
	Now, let $\varepsilon = \frac{1}{2^{\eta}}$, where $\eta$ is the size of the \textsc{Sat} instance ($\varepsilon$ can be encoded with a number of bits polynomial in $|C|$ and $|V|$).
	Assume there is a polynomial-time $\mathsf{poly}(\eta)$-approximation algorithm $\mathcal{A}$.
	If \textsc{Sat} is satisfiable, $\mathcal{A}$ applied to $\Gamma_\varepsilon(C,V)$ would return a CCE with social welfare at least $ \frac{2}{\mathsf{poly}(\eta)}$.
	Since, for $\eta$ sufficiently large, $\frac{2}{\mathsf{poly}(\eta)} > \frac{1}{2^{\eta}}$, $\mathcal{A}$ would allow us to decide in polynomial time whether \textsc{Sat} is satisfiable, a contradiction unless \textsf{P} = \textsf{NP}.
\end{proof}

\thmInapxB*

\begin{proof}
	We use a reduction similar to that in Theorem~\ref{thm:apx_hard_nature}. We build a three-player EFG $\hat\Gamma_\varepsilon(C,V)$ such that:
	\begin{itemize}[nolistsep,itemsep=0mm]
		\item The game starts in $h^\varnothing \in I^\varnothing$, as $\Gamma(C,V)$.
		\item At state $h^\textsc{N} $, player 3 plays an action $\{ a^{\textsc{O},\phi} \mid \phi \in C \} \cup \{ a^\textsc{I} \}$, with $h^\textsc{N} \cdot a^{\textsc{I}} = h^{\textsc{I}}$, $h^\textsc{N} \cdot a^{\textsc{O},\phi} = h^{\textsc{O}, \phi}$.
		\item All states $h^{\textsc{O}, \phi}$ and $h^{\textsc{I}}$ belong to a player 1's infoset $I^\textsc{N}$, where she selects an action among $\{  a^{\textsc{N},\phi} \mid \phi \in C \}$.
		\item Then, if player 3 played $a^\textsc{I}$, $h^{\textsc{I}} \cdot a^{\textsc{N},\phi} = h^\phi$ and the games goes on as $\Gamma(C,V)$ (with player 3's payoffs set to zero). Instead, if player 3 played an action $a^{\textsc{O},\phi}$, the game ends with $2 u_1(z) = 2 u_2(z) = - u_3(z) = \frac{ |C|}{|C|-1}$ if $z = h^{\textsc{O},\phi} \cdot a^{\textsc{N},\phi'})$ and $\phi \neq \phi'$, while $2 u_1(z) = 2 u_2(z) = - u_3(z) = -|C|$ if $\phi = \phi'$.
	\end{itemize}
	Intuitively, the introduction of a third player allows us to simulate the random move of Nature in $\Gamma(C,V)$, since, in any CCE of $\hat\Gamma(C,V)$, player 1 is recommended to play a uniform distribution at infoset $I^\textsc{N}$ and player 3 is always told to play action $a^{\textsc{I}}$.
	First, if player 3 is recommended an action $a^{\textsc{O},\phi}$ with positive probability, then player 2 would have an incentive to switch to action $a^\textsc{O}$ at $I^\varnothing$. 
	Moreover, assuming player 3 is told to play $a^\textsc{I}$, if player 1 is recommended to play some action $a^{\textsc{N},\phi}$ with probability $p > \frac{1}{|C|}$, then player 3 would have an incentive to switch to action $a^{\textsc{O},\phi}$, as she would get $p |C| - (1-p) \frac{|C|}{|C|-1} > 0$, while she gets 0 by playing $a^\textsc{I}$.
	Finally, a reasoning similar to that for Theorem~\ref{thm:apx_hard_nature} concludes the proof.
\end{proof}

\section{CFR-S}\label{sec:appendix-cfr-s}

\subsection{Example}

Figure~\ref{fig:shapley} reports the game employed in the experiments of Figure~\ref{fig:esempi}, where the outer product of the average strategies $\bar x_1^T\otimes \bar x_2^T$ obtained by RM does not converge to a CCE as $T\to\infty$
(for $\bar x_1^T\otimes \bar x_2^T$, $\varepsilon$ of the $\varepsilon$-CCE has a cyclic behavior and does not converge to zero).
\subsection{Omitted Proofs}

The theoretical guarantees of CFR-S can be derived via the framework of~\citet{farina2018online}, as discussed in the following.

At each iteration $t$, let $\sigma_i^t \in \Sigma_i$ be the normal-form plan sampled by player $i$ and $\sigma_{-i}^t \in \Sigma_{-i}$ be the plans drawn by the other players. 
The utility experienced by player~$i$ at stage~$t$ is denoted by $u_i^t(\sigma_i^t)\coloneqq u_i(\sigma_i^t,\sigma_{-i}^t)$. 
The players' observations in CFR-S call for a slight variation in the definition of cumulative regret.
\begin{figure}
\centering
{\renewcommand{\arraystretch}{1.1}
	\begin{tabular}{|c|c|c|}
		\hline
		1,0 & 0,1 & 0,0\\
		\hline
		0,0 & 2,0 & 0,1\\
		\hline
		0,1 & 0,0 & 1,0\\
		\hline
\end{tabular}\vspace{.3cm}}
\caption{A simple variation of the Shapley game.}
\label{fig:shapley}
\end{figure} 
After $T$ iterations, we define the cumulative regret experienced by player $i$ as
\begin{equation}\label{eq:sampled_regret}
\tilde R_i^T\coloneqq \max_{\hat\sigma_i\in \Sigma_i}\sum_{t=1}^T\left( u_i^t(\hat \sigma_i)- u_i^t(\sigma_i^t)\right).
\end{equation}
The connection between the cumulative regret and the set of CCEs remains unchanged when the regret is defined as in Equation~\eqref{eq:sampled_regret}, as shown by the following result (whose proof is similar to that of~\cite{hart2000simple}).
\begin{theorem}\label{thm:convergence-to-cce}
	If $\lim\sup_{T \to \infty} \frac{1}{T} \tilde R_i^T \leq 0$ almost surely for each player $i\in\pl$, then the empirical frequency of play $\bar x^T$ converges almost surely as $T \to \infty$ to the set of CCEs.
\end{theorem}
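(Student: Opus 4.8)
The plan is to reduce the statement to one exact, pathwise identity relating the empirical distribution to the sampled regret, and then to a routine compactness argument on the simplex $\X$, which is compact and convex. No martingale or concentration argument is needed, because $\bar x^T$ is the \emph{exact} empirical distribution of the realized plans rather than an estimate of it; the only randomness that matters enters through the hypothesis on $\tilde R_i^T$.

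First I would compute, for a fixed player $i\in\pl$ and a fixed deviation $\sigma_i'\in\Sigma_i$, the benefit of deviating under $\bar x^T$. Since $\bar x^T$ places mass exactly $1/T$ on each realized joint plan $\sigma^t=(\sigma_i^t,\sigma_{-i}^t)$, we have
\[
\sum_{\sigma\in\Sigma}\bar x^T(\sigma)\big(u_i(\sigma_i',\sigma_{-i})-u_i(\sigma)\big)=\frac{1}{T}\sum_{t=1}^T\big(u_i^t(\sigma_i')-u_i^t(\sigma_i^t)\big).
\]
Taking the maximum over $\sigma_i'\in\Sigma_i$, the right-hand side becomes exactly $\frac{1}{T}\tilde R_i^T$. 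Thus the worst-case incentive for player $i$ to deviate from $\bar x^T$ equals $\frac{1}{T}\tilde R_i^T$; this is the crux, and it holds for every sample path with no sampling error.

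Next I would package the CCE condition as a continuous function on $\X$, namely $f(x):=\max_{i\in\pl}\max_{\sigma_i'\in\Sigma_i}\sum_{\sigma\in\Sigma}x(\sigma)(u_i(\sigma_i',\sigma_{-i})-u_i(\sigma))$. As a finite maximum of linear functions, $f$ is continuous (indeed convex), and by Definition~\ref{def:cce} the set of CCEs is exactly $K=\{x\in\X:f(x)\le 0\}$, which is nonempty and closed. By the identity above and the finiteness of $\pl$, the hypothesis $\limsup_T\frac{1}{T}\tilde R_i^T\le 0$ (a.s., for each $i$) yields $\limsup_T f(\bar x^T)\le 0$ almost surely, using $\limsup_T\max_i g_i^T=\max_i\limsup_T g_i^T$ for finite index sets. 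I would then fix a sample path in this almost-sure event (a finite intersection): on it, compactness of $\X$ means every subsequence of $(\bar x^T)$ has a convergent sub-subsequence, and if $\bar x^{T_k}\to x^\infty$ then $f(x^\infty)=\lim_k f(\bar x^{T_k})\le\limsup_T f(\bar x^T)\le 0$, so $x^\infty\in K$. Hence every limit point lies in $K$, which is equivalent to $\mathrm{dist}(\bar x^T,K)\to 0$; holding on an almost-sure event, this gives almost-sure convergence of $\bar x^T$ to the set of CCEs.

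The main obstacle is organizational rather than analytic: the exact identity in the second paragraph is what makes everything collapse, so the real work is verifying it carefully and then invoking the standard topological fact that pathwise decay of the continuous violation function $f$ to at most $0$ forces convergence to its zero-sublevel set $K$. I would pay particular attention to the almost-sure bookkeeping (intersecting the $|\pl|$ events and then arguing deterministically on each good path) and to the harmless interchange of $\limsup$ with the finite maximum over players.
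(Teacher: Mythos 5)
Your proposal is correct and follows essentially the same route as the paper's own proof: the exact pathwise identity $\sum_{\sigma}\bar x^T(\sigma)\bigl(u_i(\hat\sigma_i,\sigma_{-i})-u_i(\sigma)\bigr)=\frac{1}{T}\sum_{t=1}^T\bigl(u_i^t(\hat\sigma_i)-u_i^t(\sigma_i^t)\bigr)$ is precisely the paper's central step, and the paper likewise concludes by passing to the limit along convergent subsequences of $\bar x^T$ in the compact simplex $\X$. Your write-up is merely more explicit about the bookkeeping (the continuous violation function $f$, closedness of its zero-sublevel set, and the intersection of the almost-sure events over the finitely many players), which the paper leaves implicit.
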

\begin{proof}
	By definition of cumulative regret, and by taking its average, we have
	\[
	\limsup_{T\to\infty}\frac{1}{T}\max_{\hat\sigma_i\in \Sigma_i}\sum_{t=1}^T\left( u_i^t(\hat \sigma_i)- u_i^t(\sigma_i^t)\right)\leq 0,
	\] 
	which holds almost surely.
	Let $\sigma^t\coloneqq (\sigma_i^t,\sigma_{-i}^t)$. It follows that, for each normal-form plan $\hat \sigma_i\in \Sigma_i$ we have
	\begin{align*}
	&\frac{1}{T}\sum_{t=1}^T\left(u_i(\hat \sigma_i,\sigma_{-i}^t)-u_i(\sigma^t)\right)= \\
	&\hspace{3cm} \sum_{\sigma\in\Sigma} \bar x^T(\sigma)\left(u_i(\hat \sigma_i,\sigma_{-i})-u_i(\sigma)\right).
	\end{align*}
	
	Where $\bar x^T(\sigma)$ is the empirical frequency of $\sigma$ after $T$ iterations.
	On any subsequence where $\bar x^T$ converges, that is $\bar x^T\to x^\ast\in\X$, it holds almost surely, for each $\hat \sigma_i\in\Sigma_i$ that
	\begin{align*}
	&\sum_{\sigma\in\Sigma}\bar x^T(\sigma)\left(u_i(\hat \sigma_i,\sigma_{-i})-u_i(\sigma)\right)\to \\
	&\hspace{3cm}\sum_{\sigma\in\Sigma}x^\ast(\sigma)\left(u_i(\hat \sigma_i,\sigma_{-i})-u_i(\sigma)\right). 
	\end{align*}
	The result immediately holds for Definition~\ref{def:cce}.
\end{proof}

In the following, we follow the approach of~\citet{farina2018online} to show how to decompose $\tilde R_i^T$ into regret terms which are computed locally at player $i$'s infosets.
This allows us to avoid working with the (exponential-sized) normal form of an EFG even if $\tilde R_i^T$ is defined over player~$i$'s normal-form plans. 
%
%
$\tilde R_i^T$ can be minimized via the minimization of other suitably defined regrets computed locally at player $i$'s infosets.
In order to do this, we use the idea of \emph{laminar regret decomposition}~\cite{farina2018online}, but reasoning only on vertices of $\X_i$.

Given $\sigma_i \in \Sigma_i$, we denote by $\sigma_i(I)$ the action selected in $\sigma_i$ at infoset $I\in \mathcal{I}_i$. 
Moreover, $\sigma_{i \downarrow I}$ is the (sub)vector containing the actions selected in $\sigma_i$ at $I \in \mathcal{I}_i$ and all its descendant infosets.

First, we denote with $u_{i,I}^t: A(I)\to \mathbb{R}$ the \emph{immediate utility} observed by player $i$ at infoset $I \in \I_i$, during iteration $t$.
For every $a \in A(I)$, $u_{i,I}^t(a)$ is the utility experienced by player $i$ if the game ends after playing $a$ at $I$, without passing through another player $i$'s infoset. 

Then, the following is player $i$'s utility attainable at infoset $I\in\I_i$ when a normal-form plan $\hat\sigma_i \in \Sigma_{i}$ is selected:
\begin{equation}
\hat V_I^t(\hat \sigma_{i\downarrow I}) \coloneqq u^t_{i,I}(\hat\sigma_{i\downarrow I}(I)) + \sum_{I'\in\mathcal{C}_{I,\hat\sigma_{i\downarrow I}(I)}}\hat V_{I'}^t(\hat \sigma_{i\downarrow I'}),
\end{equation}
where $\mathcal{C}_{I,a} \subseteq \mathcal{I}_i$ is the set of possible next player $i$'s infosets, given that she played action $a \in A(I)$ at infoset $I \in \mathcal{I}_i$.
We introduce a parameterized utility function, which is used to define regrets locally at each infoset, and reads as follows:
\begin{equation}\label{eq:parametrized_util}
\hat u_{i,I}^t: a\in A(I)\mapsto u_{i,I}^t(a) + \sum_{I'\in\mathcal{C}_{I,a}} 
\hat V_{I'}^t(\sigma_{i\downarrow I'}^t).
\end{equation}

The utility function $\hat u_{i,I}^t$ preserves convexity of $u_{i}^t$.
Finally, we modify the notion of \emph{laminar regret}, as
\begin{equation}\label{eq:laminar}
\hat R_I^t\coloneqq\max_{ a\in A(I)}\sum_{t=1}^T\hat u^t_{i,I}( a) - \sum_{t=1}^T\hat u_{i,I}^t(\sigma_i^t(I)).
\end{equation}

Let $V_I^t\coloneqq \hat V_I^t(\sigma_{i,\downarrow I}^t)$.
Then, we introduce the cumulative regret at infoset $I\in\I_i$, defined as
\begin{equation}\label{eq:cumulative_regret_I}
R_{\downarrow I}^T\coloneqq 
\max_{\hat \sigma_{i \downarrow I}}\sum_{t=1}^T\hat V_{I}^t(\hat \sigma_{i\downarrow I})-\sum_{t=1}^T V^t_{I}.
\end{equation}

\begin{lemma}\label{lemma:cumulative}
	The cumulative regret at each infoset $I\in\I_i$ can be decomposed as
	\begin{equation*}
	R_{\downarrow I}^T=\!\max_{ a\in A(I)}\left(\sum_{t=1}^T\hat u_{i,I}^t( a)\right.\left.+\!\!\!\sum_{I'\in\mathcal{C}_{I, a}} \!\!\!\! R_{\downarrow I'}^T\!\right)-\sum_{t=1}^T\hat u_{i,I}^t(\sigma_{i}^t(I)).
	\end{equation*}
\end{lemma}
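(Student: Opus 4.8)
The plan is to prove the identity by expanding both terms in the definition~\eqref{eq:cumulative_regret_I} of $R_{\downarrow I}^T$ using the recursive structure of $\hat V_I^t$, and then matching the resulting pieces against the parameterized utility $\hat u_{i,I}^t$ of~\eqref{eq:parametrized_util} and the child regrets $R_{\downarrow I'}^T$. First I would handle the maximization term $\max_{\hat\sigma_{i\downarrow I}}\sum_{t=1}^T \hat V_I^t(\hat\sigma_{i\downarrow I})$. A sub-plan $\hat\sigma_{i\downarrow I}$ splits into the action $a=\hat\sigma_{i\downarrow I}(I)$ played at $I$ together with, for every child $I'\in\mathcal{C}_{I,a}$, the continuation sub-plan $\hat\sigma_{i\downarrow I'}$; since $\hat V_I^t$ depends only on these components, I would rewrite the max as an outer maximization over $a\in A(I)$ and, for the fixed $a$, an inner maximization that factorizes across the children.

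The key step is that, by perfect recall, the continuation sub-plans at distinct children are chosen independently, so for each $a$ the inner problem separates and
\[
\max_{\hat\sigma_{i\downarrow I}}\sum_{t=1}^T \hat V_I^t(\hat\sigma_{i\downarrow I}) = \max_{a\in A(I)}\Bigl(\sum_{t=1}^T u_{i,I}^t(a) + \sum_{I'\in\mathcal{C}_{I,a}}\max_{\hat\sigma_{i\downarrow I'}}\sum_{t=1}^T \hat V_{I'}^t(\hat\sigma_{i\downarrow I'})\Bigr).
\]
I would then substitute $\max_{\hat\sigma_{i\downarrow I'}}\sum_{t=1}^T \hat V_{I'}^t(\hat\sigma_{i\downarrow I'}) = R_{\downarrow I'}^T + \sum_{t=1}^T V_{I'}^t$ from~\eqref{eq:cumulative_regret_I}, and fold $\sum_{t=1}^T u_{i,I}^t(a)+\sum_{I'\in\mathcal{C}_{I,a}}\sum_{t=1}^T V_{I'}^t$ back into $\sum_{t=1}^T \hat u_{i,I}^t(a)$ via the definition of $\hat u_{i,I}^t$, recalling that $V_{I'}^t = \hat V_{I'}^t(\sigma_{i\downarrow I'}^t)$. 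This converts the maximization term into exactly $\max_{a\in A(I)}\bigl(\sum_{t=1}^T \hat u_{i,I}^t(a)+\sum_{I'\in\mathcal{C}_{I,a}} R_{\downarrow I'}^T\bigr)$.

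For the subtracted term I would unfold $V_I^t=\hat V_I^t(\sigma_{i\downarrow I}^t)$ by one level: writing $a^t=\sigma_i^t(I)$ for the realized action, the recursion gives $\hat V_I^t(\sigma_{i\downarrow I}^t)=u_{i,I}^t(a^t)+\sum_{I'\in\mathcal{C}_{I,a^t}}V_{I'}^t$, which is precisely $\hat u_{i,I}^t(\sigma_i^t(I))$ by definition~\eqref{eq:parametrized_util}. Summing over $t$ yields $\sum_{t=1}^T V_I^t=\sum_{t=1}^T \hat u_{i,I}^t(\sigma_i^t(I))$, and combining this with the computed maximization term gives the claimed decomposition.

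I expect the only genuine obstacle to be justifying the factorization of the inner maximization across the children in $\mathcal{C}_{I,a}$: this is where perfect recall is essential, since it guarantees both that the continuation sub-plans at sibling infosets are unconstrained and can be optimized separately, and that each relevant child infoset is reached through a unique action at $I$, so the children sets $\mathcal{C}_{I,a}$ are disjoint across $a$ and no sub-plan is double-counted. Everything else reduces to a bookkeeping exercise of matching the definitions of $\hat V_I^t$, $V_I^t$, and $\hat u_{i,I}^t$.
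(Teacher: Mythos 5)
Your proposal is correct and follows essentially the same route as the paper's proof: expand $\hat V_I^t$ by its recursion, split the maximization over $\hat\sigma_{i\downarrow I}$ into an outer max over $a\in A(I)$ and independent inner maxes over the continuation sub-plans at the children, substitute $\max_{\hat\sigma_{i\downarrow I'}}\sum_t \hat V_{I'}^t = R_{\downarrow I'}^T+\sum_t V_{I'}^t$, and fold the remaining terms back into $\hat u_{i,I}^t$. Your explicit justification of the factorization across children via perfect recall is a point the paper leaves implicit, but the argument is the same.
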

\begin{proof}
	By definition of cumulative regret at $I\in\I_i$ we have that:
	\begin{align*}
		&R_{\downarrow I}^T\coloneqq  
		\max_{\hat \sigma_{i \downarrow I}}\sum_{t=1}^T\hat V_{I}^t(\hat \sigma_{i \downarrow I})-\sum_{t=1}^TV^t_{I}=\\
		& =\max_{\hat \sigma_{i\downarrow I}}\sum_{t=1}^T\left(u^t_{i,I}(\hat\sigma_{i \downarrow I}(I)) + \sum_{I'\in\mathcal{C}_{I,\hat\sigma_{i \downarrow I}^t(I)}} \hat V_{I'}^t(\hat \sigma_{i \downarrow I'}) \right) -\\
		&\hspace{6cm} \sum_{t=1}^T V^t_{I}=\\
		& = \max_{ a\in A(I)}\left( \sum_{t=1}^T u^t_{i,I}( a)+ \sum_{I'\in\mathcal{C}_{I,a}}\max_{\hat\sigma_{i \downarrow I'}}\sum_{t=1}^T\hat V_{I'}^t(\hat\sigma_{i \downarrow I'})\right)-\\
		&\hspace{6cm} \sum_{t=1}^T V^t_{I}. \\
	\end{align*}
	Then, by employing Equation~\eqref{eq:cumulative_regret_I}, we get
	\begin{align*}
		&R_{\downarrow I}^T=\max_{ a\in A(I)}\left( \sum_{t=1}^T u^t_{i,I}( a) +\right.\\
		&\hspace{2cm}\left.\sum_{I'\in\mathcal{C}_{I,a}}\left(R_{\downarrow I'}^T+\sum_{t=1}^T V_{I'}^t\right)\right)- \sum_{t=1}^T V^t_{I}.
	\end{align*}
	Finally, we obtain the result by rewriting terms according to Equation~\eqref{eq:parametrized_util}.
\end{proof}

The following theorem shows that, in order to minimize $\tilde R_i^T$, it is enough to minimize the laminar regret locally at each $I\in\I_i$ as defined in Equation~\eqref{eq:laminar}. 

\begin{lemma}\label{lemma:sum}
	The cumulative regret $\tilde R_i^T$ satisfies the following:
	\[
	\tilde R_i^T \leq \max_{\hat \sigma_i \in \Sigma_i}\sum_{I\in\I_i}\rho^{\hat \sigma_i}_I\hat R_I^T.
	\]
\end{lemma}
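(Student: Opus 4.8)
The plan is to bound the global normal-form regret $\tilde R_i^T$ by relating it to the cumulative regret at the root infoset and then unrolling the laminar decomposition from Lemma~\ref{lemma:cumulative} down the tree. First I would observe that, since the utility $u_i^t(\hat\sigma_i)$ experienced over a full normal-form plan equals the attainable utility $\hat V_{I_\varnothing}^t(\hat\sigma_{i\downarrow I_\varnothing})$ at the root infoset $I_\varnothing$ (the empty-sequence infoset where play begins), the definition of $\tilde R_i^T$ in Equation~\eqref{eq:sampled_regret} coincides with $R_{\downarrow I_\varnothing}^T$ from Equation~\eqref{eq:cumulative_regret_I}. Thus it suffices to show that $R_{\downarrow I_\varnothing}^T$ is dominated by the reach-weighted sum of the local laminar regrets $\hat R_I^T$.

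The core of the argument is an inductive expansion of Lemma~\ref{lemma:cumulative}. Applying that lemma at $I_\varnothing$ rewrites $R_{\downarrow I_\varnothing}^T$ as a max over root actions of the immediate laminar contribution plus the child regrets $R_{\downarrow I'}^T$. The key inequality is to bound the term $\max_{a\in A(I)}\big(\sum_t \hat u_{i,I}^t(a) + \sum_{I'\in\mathcal C_{I,a}} R_{\downarrow I'}^T\big) - \sum_t \hat u_{i,I}^t(\sigma_i^t(I))$ by $\hat R_I^T + \max_{a\in A(I)}\sum_{I'\in\mathcal C_{I,a}} R_{\downarrow I'}^T$, separating the local laminar regret $\hat R_I^T$ (Equation~\eqref{eq:laminar}) from the propagated child regret. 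I would then proceed by structural induction over the tree of player~$i$'s infosets: assuming each child satisfies $R_{\downarrow I'}^T \leq \max_{\hat\sigma_i}\sum_{J\text{ at or below }I'}\rho^{\hat\sigma_i}_{J\mid I'}\hat R_J^T$, I substitute and collect terms. The reach probabilities enter because the maximizing action at $I$ selects which children are reached, and the conditional reach probabilities along the chosen path multiply as one descends, yielding exactly the unconditional factors $\rho^{\hat\sigma_i}_I$ once everything is referred back to the root.

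The main obstacle I anticipate is handling the interaction between the $\max_a$ operations at different levels and the reach-probability weights in a way that produces a single global maximum $\max_{\hat\sigma_i\in\Sigma_i}$ rather than a nested sequence of local maxima. Because a normal-form plan $\hat\sigma_i$ fixes one action at every infoset simultaneously, I must argue that choosing actions greedily level-by-level (which is what the repeated application of Lemma~\ref{lemma:cumulative} does) is consistent with some single global plan, so that the per-infoset maxima can be absorbed into one outer maximization over $\Sigma_i$. This is where perfect recall is essential: it guarantees that the infosets of player~$i$ form a tree (the laminar structure), so the children $\mathcal C_{I,a}$ reachable after different actions are disjoint and a choice at $I$ does not constrain incompatible subtrees. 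The bookkeeping that turns the telescoped conditional reach probabilities into the stated weights $\rho^{\hat\sigma_i}_I$ is routine once this consistency is established, so I would state it and defer the mechanical verification to the induction.
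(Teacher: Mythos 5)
Your proposal matches the paper's proof: both identify $\tilde R_i^T$ with the root cumulative regret $R_{\downarrow I_\varnothing}^T$, apply Lemma~\ref{lemma:cumulative}, split the maximum to obtain $R_{\downarrow I}^T \leq \hat R_I^T + \max_{a\in A(I)}\sum_{I'\in\mathcal{C}_{I,a}} R_{\downarrow I'}^T$, and unroll this inequality inductively from the root. The paper states the induction in one line, whereas you spell out how the per-infoset maxima assemble into a single maximization over $\Sigma_i$ and where the weights $\rho^{\hat\sigma_i}_I$ (which are $0$/$1$ indicators for a pure plan) come from; this is a faithful elaboration, not a different route.
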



\begin{proof}
	Consider a generic infoset $I\in\I_i$. By exploiting Lemma~\ref{lemma:cumulative} and Definition~\ref{eq:laminar}, we can write:
	\begin{align*}
	&R_{\downarrow I}^T=\max_{ a\in A(I)}\left(\sum_{t=1}^T\hat u_{i,I}^t( a) +\sum_{I'\in\mathcal{C}_{I, a}} R_{\downarrow I'}^T\right)- \\
	&\hspace{5cm} \sum_{t=1}^T\hat u_{i,I}^t(\sigma_{i}^t(I))\leq
	&\\
	&\leq \!\! \max_{ a\in A(I)} \! \sum_{t=1}^T  \hat u_{i,I}^t( a) \! + \!\!\! \max_{ a\in A(I)} \!
	\!\sum_{I'\in\mathcal{C}_{I, a}} \!\!\!\!\! R_{\downarrow I'}^T \! -\!\sum_{t=1}^T \hat u_{i,I}^t(\sigma_{i}^t(I)) \! = \\
	& =\hat R_{I}^T+\max_{ a\in A(I)}
	\sum_{I'\in\mathcal{C}_{I, a}} R_{\downarrow I'}^T.
	\end{align*}
	By starting from the root of the game and applying the above equation inductively, we obtain our result. 
\end{proof}

The last result provides an immediate proof of the following.

\cfrs*
\begin{proof}
	CFR-S minimizes each laminar regret $\hat R_I^T$, as defined in Equation~\eqref{eq:laminar}, through standard RM, which guarantees that $\limsup_{T\to\infty} \frac{1}{T} \hat R_I^T\leq 0$ almost surely.
	Therefore, $\limsup_{T\to\infty} \frac{1}{T} \tilde R_i^T\leq 0$ almost surely (Lemma~\ref{lemma:sum}), which implies that the empirical frequency of play converges almost surely to a CCE for $T\to \infty$ (Theorem~\ref{thm:convergence-to-cce}).
\end{proof}

Finally, we observe that, at each iteration $t$ and infoset $I \in \I_i$, $\sigma_i^t(I)$ is selected according to the strategy $\pi_{i,I}^t$ recommended by the regret minimizer at infoset $I$.
Thus, $\sigma_i^t$ is drawn with probability $\prod_{I\in\I_i}\pi_{i,I,\sigma_{i}^t(I)}^t$, which is equal to $x_i^t(\sigma_i^t)$, where $x_i^t\in\X_i$ is the normal-form strategy realization equivalent to the behavioral strategy $\pi_i^t$.

\section{CFR-Jr}\label{appendix:cfr-jr}

\subsection{Algorithm Pseudocode}

For the sake of completeness and clarity, in Algorithm~\ref{alg:cfr_jr} we provide the pseudocode of the CFR-Jr algorithm, which uses a vanilla implementation of the CFR algorithm as a subroutine.

\begin{algorithm}[H]\label{algo:cfr_jr}
	\centering
	\scriptsize
	\caption{\texttt{CFR-Jr}}
	\begin{algorithmic}[1]
		\Function{\textsc{CFR-Jr}}{$\Gamma$}
			\State Initialize the joint strategy $\bar{x}$ to all zeros
			\State $t\gets 0$
			\While{$t < T$}
				\ForAll{$i \in \mathcal{P}$}
					\State $\pi_i^t \gets \textsc{CFR}(\Gamma, i)$\label{line:cfr}
					\State $x_i^t \gets \textsc{NF-Strategy-Reconstruction}(\pi_i^t)$\label{line:recon}
				\EndFor
				\State $\bar{x} \gets \bar{x} +  \bigotimes_{i \in \mathcal{P}} x_i^t$\Comment{$\bigotimes_{i \in \mathcal{P}} x_i^t$ is joint distribution $x^t$ defined as the product of the players' normal-form strategies}\label{line:sum_x_t}
				\State$t\gets t+1$
			\EndWhile
			\State \Return $\bar{x}^T = \frac{\bar{x}}{T} $
		\EndFunction
	\end{algorithmic}
	\label{alg:cfr_jr}
\end{algorithm}

CFR-Jr maintains a variable $\bar x$ which stores the sum of the joint probability distributions $x^t$ (notice that it may be compactly represented in polynomial space using a dictionary, as for $x_i$ in Algorithm~\ref{alg:algo_recon}). 
CFR-Jr executes, at each $t$, an iteration of the CFR algorithm (Line~\ref{line:cfr}). 
In particular, the $\textsc{CFR}$ subroutine executes a step of vanilla CFR, including the update of regrets and behavioral strategies. 
In addition, at each iteration $t$, CFR-Jr constructs normal-form strategies $x_i^t$ (one per player $i \in \mathcal{P}$) which are realization equivalent to the behavioral strategies $\pi_i^t$ obtained with CFR (Line~\ref{line:recon}).
Then the product $x^t$ of the players' normal-strategies is computed and added to $\bar x$ (Line~\ref{line:sum_x_t}).
Notice that $\bar{x}$ is not used by the $\textsc{CFR}$ subroutine to update the players' strategies and regrets. Finally, CFR-Jr returns the $\bar{x}$ divided by $T$, which represents the average $\bar x^T$.

\subsection{Omitted Proofs}

In order to give the full proof of Theorem~\ref{thm:corr_algo_rec}, we first need to prove two technical lemmas concerning the existence of a normal-form plan $\bar{\sigma}_i$ such that $\bar{\omega} = \min_{z \in Z(\bar{\sigma}_i)} \omega_z > 0$ whenever the vector $\omega$ has at least a strictly positive component.

For ease of presentation, we introduce some additional notation.
Extending the definition of $Z(\sigma_{i})$, we let $Z(I, a)$ be the set of terminal nodes potentially reachable from infoset $I \in \mathcal{I}_i$ when playing action $a \in A(I)$, while $Z(\sigma_i, I, a)$ is the set of terminal nodes potentially reachable from $I$ after playing action $a$ and, then, following the actions prescribed by the normal-form plan $\sigma_i \in \Sigma_i$.
Analogously, we define $Z(I)$ and $Z(\sigma_i, I)$.

Observe that, in Line~\ref{line:recon_maxmin} of Algorithm~\ref{alg:algo_recon}, the normal-form plan $\bar{\sigma}_i  \in \argmax_{\sigma_i \in \Sigma_i} \min_{z \in Z(\sigma_i)} \omega_z$ can be recursively built, by traversing the game tree, as the set of actions $\{\bar{a} \in A(I) \mid I \in \mathcal{I}_i \wedge \bar{a} \in \argmax_{a \in A(I)} \min_{z \in Z(\bar{\sigma}_i, I, a)} \omega_z \}$. 
This recursion is heavily exploited in the following proofs.

\begin{lemma}\label{induction_lemma}
	Given $\bar{\sigma}_i\in\Sigma_i$ constituted by actions $\{\bar{a} \in A(I) \mid I \in \mathcal{I}_i \wedge \bar{a} \in \argmax_{a \in A(I)} \min_{z \in Z(\bar{\sigma}_i, I, a)} \omega_z \}$, for every infoset $I \in \mathcal{I}_i$ we have that:
	$$
	\max_{a \in A(I)} \min_{z \in Z(\bar{\sigma}_i, I, a)} \omega_z = 0 \,\, \Leftrightarrow \,\, \omega_z = 0 \,\,\,\, \forall z \in Z(I).
	$$
\end{lemma}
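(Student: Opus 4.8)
\textbf{Proof plan for Lemma~\ref{induction_lemma}.}

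The natural approach is induction on the structure of the game tree, proceeding bottom-up from the infosets closest to the terminal nodes. The key observation is that the recursive construction of $\bar{\sigma}_i$ makes the quantity $\min_{z \in Z(\bar{\sigma}_i, I, a)} \omega_z$ decompose nicely: if $a \in A(I)$ is played at $I$, then $Z(\bar{\sigma}_i, I, a)$ is the disjoint union of the sets $Z(\bar{\sigma}_i, I')$ over the child infosets $I' \in \mathcal{C}_{I,a}$, together with the terminal nodes that end immediately after $a$ without passing through another player~$i$ infoset. Hence $\min_{z \in Z(\bar{\sigma}_i, I, a)} \omega_z = \min\big( \min_{z \text{ ends after } a} \omega_z, \ \min_{I' \in \mathcal{C}_{I,a}} \min_{z \in Z(\bar{\sigma}_i, I')} \omega_z \big)$.

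\textbf{Base case.} For an infoset $I$ all of whose actions lead directly to terminal nodes (no descendant player~$i$ infosets), $Z(\bar{\sigma}_i, I, a) = Z(I,a)$ for every $a$, and $\bigcup_{a \in A(I)} Z(I,a) = Z(I)$. Then $\max_{a} \min_{z \in Z(I,a)} \omega_z = 0$ forces every $Z(I,a)$ to contain some $z$ with $\omega_z = 0$ (since all $\omega_z \geq 0$, an invariant maintained by the algorithm and by the construction on Line~\ref{line:recon_maxmin}); conversely if $\omega_z = 0$ for all $z \in Z(I)$ then trivially every $\min_{z \in Z(I,a)} \omega_z = 0$, so the max is $0$.

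\textbf{Inductive step.} Assume the claim holds for every child infoset $I' \in \mathcal{C}_{I,a}$, for all $a \in A(I)$. For the ($\Leftarrow$) direction: if $\omega_z = 0$ for all $z \in Z(I)$, then in particular for each $a$ and each $I' \in \mathcal{C}_{I,a}$ we have $\omega_z = 0$ on $Z(I')$, so by the induction hypothesis $\min_{z \in Z(\bar{\sigma}_i, I', a')} \omega_z$-maximized-over-$a'$ equals $0$, meaning $\min_{z \in Z(\bar{\sigma}_i, I')} \omega_z = 0$; also the immediately-terminating leaves after $a$ have $\omega_z = 0$; hence $\min_{z \in Z(\bar{\sigma}_i, I, a)} \omega_z = 0$ for every $a$, so the max is $0$. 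For the ($\Rightarrow$) direction: suppose $\max_{a} \min_{z \in Z(\bar{\sigma}_i, I, a)} \omega_z = 0$; fix any $a \in A(I)$, so $\min_{z \in Z(\bar{\sigma}_i, I, a)} \omega_z = 0$ (using $\omega \geq 0$). By the decomposition, either some immediately-terminating leaf $z$ after $a$ has $\omega_z = 0$, or some child $I' \in \mathcal{C}_{I,a}$ has $\min_{z \in Z(\bar{\sigma}_i, I')} \omega_z = 0$. In the latter case the induction hypothesis applied at $I'$ — using that $\bar\sigma_i$ restricted below $I'$ realizes the argmax recursion — gives $\omega_z = 0$ for all $z \in Z(I')$. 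In either case we conclude $\omega_z = 0$ for some subset; ranging over all $a$ and, where needed, applying the hypothesis to conclude it holds on the whole subtree below each action, we obtain $\omega_z = 0$ for all $z \in Z(I)$.

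\textbf{Main obstacle.} The subtle point — and the step I would be most careful about — is the ($\Rightarrow$) direction of the inductive step: from $\min_{z \in Z(\bar{\sigma}_i, I, a)} \omega_z = 0$ one only immediately learns that \emph{some} terminal node in the (thin) slice selected by $\bar\sigma_i$ is zeroed, not that \emph{all} of $Z(I,a)$ (which branches over \emph{all} continuations, not just $\bar\sigma_i$'s) is zero. Making this work requires that when $I'$ is a descendant and $\min_{z \in Z(\bar{\sigma}_i, I')} \omega_z = 0$, the recursive maximality of $\bar\sigma_i$ below $I'$ together with the induction hypothesis upgrades ``zero on the $\bar\sigma_i$-slice'' to ``zero on all of $Z(I')$'' — one must check that $\bar\sigma_i \downarrow I'$ genuinely attains $\max_{\sigma_i \downarrow I'} \min_{z} \omega_z$, which is exactly what the recursive characterization of the argmax (stated just before the lemma) guarantees. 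Perfect recall is implicitly used to ensure the sets $\mathcal{C}_{I,a}$ partition the continuation correctly and that the recursion is well-defined over a tree rather than a DAG.
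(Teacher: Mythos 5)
Your induction has the right shape and you correctly flag the hard direction, but the proposal does not actually close the gap it identifies, and the gap is already present in your base case. You argue that $\max_{a\in A(I)}\min_{z\in Z(I,a)}\omega_z=0$ ``forces every $Z(I,a)$ to contain \emph{some} $z$ with $\omega_z=0$'' and then conclude $\omega_z=0$ for \emph{all} $z\in Z(I)$. That inference is false for a generic non-negative vector $\omega$: if a single action $a$ leads, through chance or the opponents' moves, to two terminal nodes with $\omega$-values $0$ and $1$, then the max--min is $0$ while $Z(I)$ is not zeroed. The lemma is only true because $\omega$ is not generic: it is a realization-form object of player $i$ alone. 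The paper's base case rests exactly on this: since $\omega$ is initialized to $\rho^{\pi_i}_z$ and every update subtracts $\bar\omega\,\rho^{\bar\sigma_i}_z$, and both quantities depend only on player $i$'s own actions along the path to $z$, one has $\omega_z=\omega_{z'}=\rho^{\pi_i}_I\pi_{i,I,a}$ (up to the accumulated subtractions) for all $z,z'\in Z(I,a)$ whenever no further player-$i$ infoset follows $(I,a)$. Hence $\min_{z\in Z(I,a)}\omega_z$ equals that common value, the max over $a$ equals $\max_{z\in Z(I)}\omega_z$, and ``the max of a non-negative function vanishes iff the function vanishes'' finishes the base case. Without this structural invariant your $\Rightarrow$ direction does not go through.

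The same omission propagates to your inductive step. From $\min_{z\in Z(\bar\sigma_i,I,a)}\omega_z=0$ and your (correct) decomposition of $Z(\bar\sigma_i,I,a)$ into immediately-terminating leaves plus the slices $Z(\bar\sigma_i,I')$ for $I'\in\mathcal{C}_{I,a}$, you only learn that \emph{one} of these pieces attains zero, whereas the conclusion needs \emph{every} immediately-terminating leaf after $(I,a)$ to be zero and \emph{every} child $I'$ to satisfy the left-hand side of the induction hypothesis. ``Ranging over all $a$'' does not supply this; the step where you pass from ``zero on the $\bar\sigma_i$-slice of one child'' to ``zero on all of $Z(I,a)$'' is precisely where the constancy of $\omega$ on the immediately-terminating leaves, and the relation it induces between those leaves and the children's subtrees, must be invoked. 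You should therefore state and prove, as a separate invariant of Algorithm~\ref{alg:algo_recon}, that $\omega_z$ depends only on player $i$'s action sequence leading to $z$, and then run your induction on top of it; as written, the key implication of the proof is invalid for an arbitrary $\omega\geq 0$.
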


\begin{proof}
	The proof is by induction on the depth of the game tree.
	Let $\mathcal{C}_{I, a}$ be the set of player $i$'s infosets immediately reachable by playing action $a \in A(I)$ at infoset $I \in \I_i$.
	
	As for the base case of the induction, let us consider an infoset $I \in \mathcal{I}_i$ such that $\mathcal{C}_{I, a} = \varnothing $ for all $ a \in A(I)$ (\emph{i.e.}, one such that no other infosets of player $i$ may be reached after). 
	Observe that, due to $\omega$ being initialized using only player $i$'s behavioral strategy $\pi_i$, we have that $\omega_z = \omega_{z'} = \rho^{\pi_i}_I \pi_{i,I,a}$ for all $ a \in A(I)$ and $z, z' \in Z(I, a)$, which in turn implies that $\max_{a \in A(I)} \min_{z' \in Z(\bar{\sigma}_i, I, a)} \omega_{z'} = \max_{a \in A(I)} \rho^{\pi_i}_I \pi_{i,I,a}= \max_{z' \in Z(I)} \omega_{z'}$. 
	The max of a non-negative function over a set is zero if and only if that function is zero for all the elements of the set, thus $\max_{a \in A(I)} \min_{z' \in Z(\bar{\sigma}_i, I, a)} \omega_{z'} = \max_{z' \in Z(I)} \omega_{z'} = 0$ if and only if $\omega_z = 0 $ for all $ z \in Z(I)$.

	For the inductive step, let us consider a generic infoset $I \in \mathcal{I}_i$. Observe that, if $\bar{a} \in \argmax_{a \in A(I)} \min_{z' \in Z(\bar{\sigma}_i, I, a)} \omega_{z'}$, we have $Z(\bar{\sigma}_i, I, \bar{a}) = Z(\bar{\sigma}_i, I)$ (since $\bar{a} \in \bar{\sigma}_i$). 
	Moreover, reasoning as above, we can conclude that 
	$$\max_{a \in A(I)} \min_{z' \in Z(\bar{\sigma}_i, I, a)} \omega_{z'} = 0$$
	if and only if $\min_{z' \in Z(\bar{\sigma}_i, I, a)} \omega_{z'} = 0 $ for all $a \in A(I)$.
	Now let us take any action $a' \in A(I)$ and any infoset $I' \in \mathcal{C}_{I, a'}$ (it has to exist at least one, otherwise we would fall back in the base case). Applying the observation above, we have that
	$$\max_{a \in A(I')} \, \min_{z' \in Z(\bar{\sigma}_i, I', a)} \omega_{z'} = \!\!\!\! \min_{z' \in Z(\bar{\sigma}_i, I', \bar{a})} \! \omega_{z'} = \! \min_{z' \in Z(\bar{\sigma}_i, I')} \omega_{z'}.$$
	But, being $I'$ a descendant of $I$ we have that $Z(I') \subseteq Z(I, a')$ and, in particular, also $Z(\bar{\sigma}_i, I') \subseteq Z(\bar{\sigma}_i, I, a')$.
	Thus, $\min_{z' \in Z(\bar{\sigma}_i, I, a')} \omega_i(z') = 0$ implies that
	$$\min_{z' \in Z(\bar{\sigma}_i, I')} \omega_i(z') = 0.$$ 
	By the induction hypothesis, we have that for $I'$ (which follows $I$ in the game tree) it holds
	$$\max_{a \in A(I')} \min_{z' \in Z(\sigma^*, I', a)} \omega_i(z') = 0$$
	if and only if $\omega_i(z) = 0 $ for every $ z \in Z(I')$. We can now build our final chain of double implications:
	$$\max_{a \in A(I)} \min_{z' \in Z(\bar{\sigma}_i, I, a)} \!\!\!\! \omega_{z'} = 0 \, \Leftrightarrow \!\! \min_{z' \in Z(\bar{\sigma}_i, I, a)} \!\!\!\!\! \omega_{z'} = 0 \,\,\,\, \forall a \in A(I),$$
	where, by generality of $a'$ and $I'$, the right term holds if and only if 
	$$\max_{a \in A(I')} \min_{z' \in Z(\bar{\sigma}_i, I', a)} \omega_{z'} = 0 \,\,\,\, \forall a' \in A(I), I' \in \mathcal{C}_{I, a'},$$
	which in turn holds, by inductive hypothesis, if and only if $$\omega_z = 0 \,\,\,\, \forall z \in Z(I') \,\,\,\, \forall a' \in A(I), I' \in \mathcal{C}_{I, a'}.$$
	This holds, given that $\bigcup_{a' \in A(I), I' \in \mathcal{C}_{I, a'}} Z(I') = Z(I)$, if and only if $\omega_z = 0 $ for all $ z \in Z(I)$, which concludes the proof.
\end{proof}

\begin{lemma}
	If $ \omega > 0$, then a normal-form plan $\bar{\sigma}_i \in \argmax_{\sigma_i \in \Sigma_i} \min_{z \in Z(\sigma_i)} \omega_z$ is such that $\min_{z \in Z(\bar{\sigma}_i)} \omega_z > 0$.
\end{lemma}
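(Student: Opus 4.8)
The plan is to leverage the characterization established in Lemma~\ref{induction_lemma}, applied at the root infoset $I_\varnothing$ (or, more precisely, to the collection of player~$i$'s infosets reachable from the root). The statement to prove is the contrapositive-friendly claim that when $\omega > 0$ componentwise, the maximin value achieved by $\bar{\sigma}_i$ is strictly positive, i.e.\ $\min_{z \in Z(\bar{\sigma}_i)} \omega_z > 0$.

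First I would note that $Z(\bar{\sigma}_i) = \bigcup_{I} Z(\bar{\sigma}_i, I, \bar{\sigma}_i(I))$ where the union is over the (topmost) player~$i$ infosets, and by the recursive construction of $\bar{\sigma}_i$ from Line~\ref{line:recon_maxmin}, $\min_{z \in Z(\bar{\sigma}_i)} \omega_z$ equals $\max_{a \in A(I)} \min_{z \in Z(\bar{\sigma}_i, I, a)} \omega_z$ at the root infoset. Then I would argue by contradiction: suppose $\min_{z \in Z(\bar{\sigma}_i)} \omega_z = 0$. Applying Lemma~\ref{induction_lemma} at the relevant root infoset, this forces $\omega_z = 0$ for all $z \in Z(I)$, and aggregating over the topmost infosets (using $\bigcup_I Z(I) = Z$ together with the terminal nodes reachable before any player~$i$ infoset, which receive probability directly and are handled by the same non-negativity argument as in the base case of the induction), we conclude $\omega_z = 0$ for every $z \in Z$, i.e.\ $\omega = 0$. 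This contradicts the hypothesis $\omega > 0$.

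The main obstacle is a bookkeeping one rather than a conceptual one: Lemma~\ref{induction_lemma} is stated for a single infoset $I$ with the equivalence ranging over $Z(I)$, so I need to make sure the ``global'' step --- passing from ``$\max_{a} \min_{z} \omega_z = 0$ over all of $Z(\bar{\sigma}_i)$'' to ``$\omega = 0$ on all of $Z$'' --- is justified. This requires handling the forest structure of the topmost player~$i$ infosets (there may be several incomparable ones, or chance/opponent nodes above them) and the terminal nodes that lie on no player~$i$ infoset at all. For the latter, one observes that such a leaf $z$ has $\omega_z = \rho^{\pi_i}_z$ determined solely by the path, independent of $\sigma_i$, so it is either always in $Z(\bar\sigma_i)$ contributing its value to the min, or $\omega_z = 0$ automatically; either way it causes no trouble. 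Once that partition of $Z$ is spelled out, the contradiction closes and the lemma follows directly.
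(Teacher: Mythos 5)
Your proof is correct and follows essentially the same route as the paper's: apply Lemma~\ref{induction_lemma} at the top of player $i$'s infoset structure to turn $\min_{z \in Z(\bar{\sigma}_i)} \omega_z = 0$ into $\omega_z = 0$ for all $z \in Z$, then contradict the hypothesis $\omega > 0$ using non-negativity of $\omega$. The only difference is that you explicitly address the forest of topmost player-$i$ infosets and the leaves lying above any such infoset, whereas the paper simply assumes a single root infoset $\varnothing$ with $Z(\varnothing) = Z$; your extra bookkeeping is harmless and, if anything, closes a small gap in the paper's exposition.
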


\begin{proof}
	Let $\varnothing$ be the root infoset of player $i$.
	Observe that, if $\bar{a} \in \argmax_{a \in A(\varnothing)} \min_{z' \in Z(\bar{\sigma}_i, \varnothing, a)} \omega_{z'}$, we have $Z(\bar{\sigma}_i) = Z(\bar{\sigma}_i, \varnothing, \bar{a})$ (since $\bar{a} \in \bar{\sigma}_i$).
	Applying Lemma~\ref{induction_lemma} to infoset $\varnothing$, we have that $\max_{a \in A(\varnothing)} \min_{z' \in Z(\bar{\sigma}_i, \varnothing, a)} \omega_{z'} = \min_{z' \in Z(\bar{\sigma}_i)} \omega_{z'} = 0$ if and only if $\omega_z = 0 $ for every $z \in Z(\varnothing) = Z$. 
	Since $\omega_z \geq 0 $ for all $ z \in Z$, we have $\min_{z' \in Z(\bar{\sigma}_i)} \omega_{z'} > 0$ if and only if $\omega > 0$. Being this last condition always verified within the main loop of Algorithm~\ref{alg:algo_recon}, we have that a normal-form plan $\bar{\sigma}_i \in \argmax_{\sigma_i \in \Sigma_i} \min_{z \in Z(\sigma_i)} \omega_z$ is such that $\min_{z \in Z(\bar{\sigma}_i)} \omega_z > 0$.
\end{proof}

\thmAlgoRecon*

\begin{proof}
	Due to $\bar{\omega} \, \rho_z^{\bar{\sigma}_i}$ being equal to $\bar{\omega} = \min_{z \in Z(\bar{\sigma}_i)} \omega_z$ for all $z \in Z(\bar{\sigma}_i)$ and equal to zero for all $z \not\in Z(\bar{\sigma}_i)$, we have that after each iteration all components of $\omega$ are non-negative. 
	Moreover, at least for $\bar{z} \in \argmin_{z \in Z(\bar{\sigma}_i)} \omega_z$, we have $\omega_{\bar{z}} = 0$.
	Then, at each iteration, $\omega$ becomes zero for at least one terminal node, and, given that it cannot go below zero, we have that the vector $\omega$ is zeroed in at most $|Z|$ iterations.
	Each iteration runs in $O(\max\{|\mathcal{I}_i|, |Z|\})$, as $\bar{\sigma}_i$ can be recursively computed by iterating on the infosets in a bottom-up fashion, while each $\omega$ update needs to consider each terminal at most once. 
	Given that for each non-degenerate game tree (\emph{i.e.}, $A(I) > 1 $ for all $ I \in \mathcal{I}_i$) we have $|\mathcal{I}_i| \leq |Z|$, the overall complexity of the algorithm is $O(|Z|^2)$.
	
	As a consequence of $\omega$ becoming zero for some $z \in Z(\bar{\sigma}_i)$ at each iteration, and given that $\bar{\sigma}_i$ is built in such a way that $\omega_z > 0 $ for all $z \in Z(\bar{\sigma}_i)$, no normal-form plan is ever selected more than once as $\bar{\sigma}_i$. 
	Then, the support of $x_i$ has size equal to the number of normal-form plans $\bar{\sigma}_i$ selected at each iteration of Algorithm~\ref{alg:algo_recon}, which is at most $|Z|$.
	
	Let $\bar{k}$ be the number of iterations, and $\bar \sigma_i^k$ be the normal-form plan reconstructed at iteration $k$.
	By recursively expanding the equation $\omega_z \leftarrow \omega_z - \bar{\omega} \, \rho_z^{\bar{\sigma}_i}$ we can obtain the following (for clarity, we add apices indicating the iteration):
	\begin{align*}
		\omega_z^{\bar{k}} & = \omega_z^{\bar{k}-1} - \rho_z^{\bar \sigma_i^{\bar{k}}} \!\! \min_{z' \in Z(\bar \sigma_i^{\bar{k}-1})} \!\! \omega_{z'}^{\bar{k}-1} =
		\\
		&= \omega_z^{\bar{k}-2} - \rho_z^{\bar \sigma_i^{\bar{k}-1}} \!\!\!\!\!\! \min_{z' \in Z(\bar \sigma_i^{\bar{k}-2})} \!\! \omega_{z'}^{\bar{k}-2} - \rho_z^{\bar \sigma_i^{\bar{k}}} \!\!\!\! \min_{z' \in Z(\bar \sigma_i^{\bar{k}-1})} \!\! \omega_{z'}^{\bar{k}-1} =
		\\
		& = \ldots = \omega_z^0 - \sum_{k = 1,..,\bar{k}} \rho_z^{\bar \sigma_i^{k}} \!\! \min_{z' \in Z(\bar \sigma_i^{k-1})} \!\! \omega_{z'}^{k-1} = 0,
	\end{align*}
	which gives
	$$
	\omega_z = \sum_{k = 1,..,\bar{k}} \rho_z^{\bar\sigma_i^{k}} \min_{z' \in Z(\bar \sigma_i^{k-1})} \omega_{z'}^{k-1}.
	$$
	Finally, we show that $x_i$ and $\pi_i$ are realization equivalent by checking that they force the same probability distribution over $Z$.
	We have:
	\begin{align*}
		&\sum_{\sigma_i \in \Sigma_i} \! \rho_z^{\sigma_i} x_i(\sigma_i) = \hspace{-.5cm}\sum_{\sigma_i \in \{\bar \sigma_i^k \mid k = 1,..,\bar{k}\}} \hspace{-.75cm}\rho_z^{\sigma_i} x_i(\sigma_i) = \\
		&\sum_{k = 1,..,\bar{k}} \!\! \rho_z^{\bar \sigma_i^k} x_i(\bar \sigma_i^k) = \sum_{k = 1,..,\bar{k}} \!\! \rho_z^{\bar\sigma_i^k} \min_{z' \in Z(\bar \sigma_i^{k-1})} \omega_{z'}^{k-1} = \rho_z^{\pi_i}.
	\end{align*}
	This concludes the proof.
\end{proof}

\thmCFRJr*

\begin{proof}
	First, let us recall that $x^t \in \X$ is defined in such a way that $x^t(\sigma) = \prod_{i \in \mathcal{P}} x_i^t(\sigma_i)$ for every joint normal-form plan $\sigma \in \Sigma$, with $\sigma = (\sigma_i)_{i \in \mathcal{P}}$.
	By assumption, $\frac{1}{T}R_i^T \leq \varepsilon$ implies the following:
	\begin{align*}
		&\max_{\hat \sigma_i \in \Sigma_i} \left( \sum_{t=1}^T \sum_{\sigma_{-i} \in \Sigma_{-i}} u_i(\hat \sigma_i, \sigma_{-i}) \prod_{j \neq i \in \mathcal{P}} x_j^t(\sigma_j) - \right.\\
		&\hspace{1.2cm} \left. \sum_{t=1}^T \sum_{\substack{\sigma_i \in \Sigma_i \\ \sigma_{-i}\in \Sigma_{-i}}} u_i(\sigma_i,\sigma_{-i}) \prod_{j  \in \mathcal{P}} x_j^t(\sigma_j)  \right)\leq \varepsilon T.
	\end{align*}
	Moreover, since the condition holds for every $i \in \mathcal{P}$, by re-writing the max operator we get
	\begin{align*}
		&\sum_{t=1}^T \sum_{\sigma_{-i} \in \Sigma_{-i}} u_i(\hat \sigma_i, \sigma_{-i}) \prod_{j \neq i \in \mathcal{P}} x_j^t(\sigma_j) -\\
		&\sum_{t=1}^T \sum_{\substack{\sigma_i \in \Sigma_i \\ \sigma_{-i}\in \Sigma_{-i}}} \!\!\! u_i(\sigma_i,\sigma_{-i}) \prod_{j  \in \mathcal{P}} x_j^t(\sigma_j)  \leq \varepsilon T \,\,\,\, \forall i \in \mathcal{P}, \hat \sigma_i \in \Sigma_i.
	\end{align*}
	Since $\sum_{\sigma_i \in \Sigma_i} x_i^t(\sigma_i) = 1$, it follows that
	$$\sum_{\sigma_{-i} \in \Sigma_{-i}} u_i(\hat \sigma_i, \sigma_{-i}) \prod_{j \neq i \in \mathcal{P}} x_j^t(\sigma_j)$$ i
	s equal to 
	$$\sum_{\sigma_i \in \Sigma_i} \sum_{\sigma_{-i} \in \Sigma_{-i}} u_i(\hat \sigma_i, \sigma_{-i}) \prod_{j \in \mathcal{P}} x_j^t(\sigma_j)$$. Thus,
	\begin{align*}
		&\sum_{t=1}^{T} \sum_{\substack{\sigma_i \in \Sigma_i \\ \sigma_{-i}\in \Sigma_{-i}}} \prod_{j \in \mathcal{P}} x_j^t(\sigma_{j}) \left( u_i(\hat \sigma_i, \sigma_{-i}) - \right. \\
		&\hspace{2cm} \left. u_i(\sigma_i, \sigma_{-i})  \right) \leq \varepsilon T \,\,\,\, \forall i \in \mathcal{P}, \hat \sigma_i \in \Sigma_i.
	\end{align*}
	Using the definition of $\bar x^T$, we obtain
	\begin{align*}
	&\sum_{\substack{\sigma_i \in \Sigma_i \\ \sigma_{-i}\in \Sigma_{-i}}} \bar x^T(\sigma_i, \sigma_{-i}) \left( u_i(\hat \sigma_i, \sigma_{-i}) - \right. \\
	&\hspace{2cm} \left. u_i(\sigma_i, \sigma_{-i})  \right) \leq \varepsilon \,\,\,\, \forall i \in \mathcal{P}, \hat \sigma_i \in \Sigma_i,
	\end{align*}
	which proves that $\bar x^T$ is an $\varepsilon$-CCE.
\end{proof}

\section{Additional Details on the Experimental Evaluation}\label{sec:appendix_exp}

In this section we provide further details on the experimental evaluation.

\subsection{Experimental Setup}
The multi-player games instances that we employ are structured as follows.

\textbf{Kuhn Poker}.
In Kuhn3-$r$ (K3-$r$), each player initially pays one chip to the pot, and is dealt a single private card.
Then, players act in turns. 
The first player may check or bet, that is paying an additional chip to the pot.
The second player can either fold/call the bet, or check/bet after an initial check of the first player.
At this point, if no bets have been placed, the third player decides between checking or betting. 
Otherwise, she can either fold or call.
If the third player bets, then the others have to choose between folding or calling.
At the showdown, the player with the highest card who has not folded wins all the chips in the pot.

\textbf{Leduc Hold’em Poker}. We employ larger three-player variants than the two-player version usually employed, see, e.g., \cite{southey2005bayes}. 
In our enlarged variants, Leduc3-$r$ (L3-$r$) contains three suits and $r \geq 3$ card ranks (i.e., it contains triples of cards $A,2,\ldots,r$ for a total of $3\,r$ cards).
Each player initially pays one chip to the pot, and is dealt a single private card. 
After a first round of betting (with betting parameter $k_1$), a community card is dealt face up. 
Then, a second round of betting is played (with betting parameter $k_2$). 
Finally, a showdown occurs and players that did not fold reveal their private cards. 
If a player pairs her card with the community card, she wins the pot. 
Otherwise, the player with the highest private card wins. 
If more players pair their card with the community card, or if no pair is made but more players have the same private card, there is a draw and the pot is equally split between the winning players.
Betting rounds follow the same rules of Kuhn Poker. 
We set $k_1=2$ and $k_2=4$. These are the numbers of chips that a player has to pay to bet/call in the first and second round of betting, respectively.

\textbf{Goofspiel}. In addition to Poker games, we consider the game of Goofspiel. 
In this game, cards rank A (low), 2, $\ldots$, 10, J, Q, K (high). 
When scoring points, the Ace is worth 1 point, cards 2-10 their face value, Jack 11, Queen 12, and King 13.
Goofspiel$p$-$r$ ($p$ is the number of players) employs $p+1$ suits, each containing cards A$,\ldots$,$r$.
One suit is singled out as the prizes. The prizes are shuffled and placed between the players, with the top card turned face up. 
Each of the remaining suits becomes the hand of one of the players.
The game proceeds in rounds. Each player selects a card from her hand, keeping her choice secret from the opponent. Once all players have selected a card, they are simultaneously revealed, and the player with the highest bid wins the prize card.
We employ the following tie breaking rules to obtain different kinds of instances.
Some of them (\emph{e.g.}, \emph{Accumulate}) are \emph{almost} constant-sum games (\emph{i.e.}, constant sum for all but few outcomes), while others (\emph{e.g.}, \emph{Discard always}) present larger differences in the sum of payoffs attainable at different terminal nodes:
\begin{itemize}[nolistsep,itemsep=0mm]
	\item \emph{Accumulate} (A): the prize card goes to the player that selected the highest unique card.
	If all players selected the same card, the prize card is taken aside and the game continues unveiling the next one: the winner (if any) of the new round will take both prize cards.
	The process is repeated until the tie is broken or the game ends, in which case all prize cards that have been taken aside are discarded.
	
	\item \emph{Discard-if-all} (DA): the prize card goes to the player that selected the highest unique card; if all players selected the same card, the prize card is discarded.
	
	\item \emph{Discard-if-high} (DH): if the tie is on the highest-valued card, then the prize card is discarded; otherwise, the prize card goes to the player that selected the highest unique card.
	
	\item \emph{Discard always} (AL): the prize card is discarded and the game goes on with the next round.
\end{itemize}

The game ends when the players terminate their cards. Players calculate their final
utility by summing up the value of the prize cards they won.

\subsection{Full Experimental Results}

%
CFR, CFR-S and CFR-Jr algorithms have all been implemented in the Python 3 language, while CG employs the GUROBI 8.0 MILP solver to solve the pricing problems. All the experiments are run with a 24 hours time-limit on a UNIX machine with a total of 32 cores working at 2.3 GHz, equipped with 128 GB of RAM.

\begin{table*}
	\centering
	\includegraphics[width=0.98\textwidth]{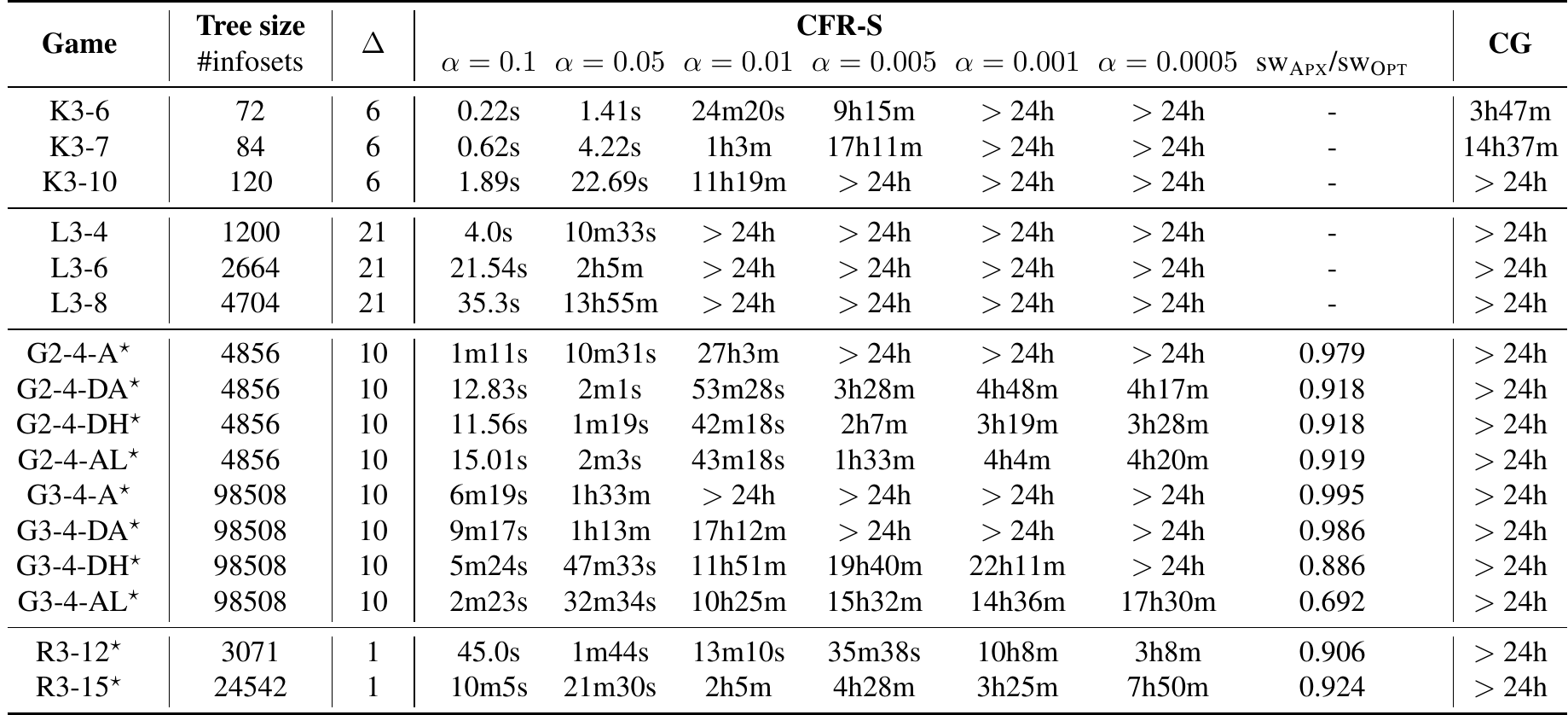}
	\caption{Running times and social welfare obtained by the CFR-S algorithm (for various levels of accuracy), and the CG algorithm. General-sum instances are marked with $^\star$.}
	\label{tab:full_convergence_cfr_s}
\end{table*}

\begin{table*}
	\centering
	\includegraphics[width=0.98\textwidth]{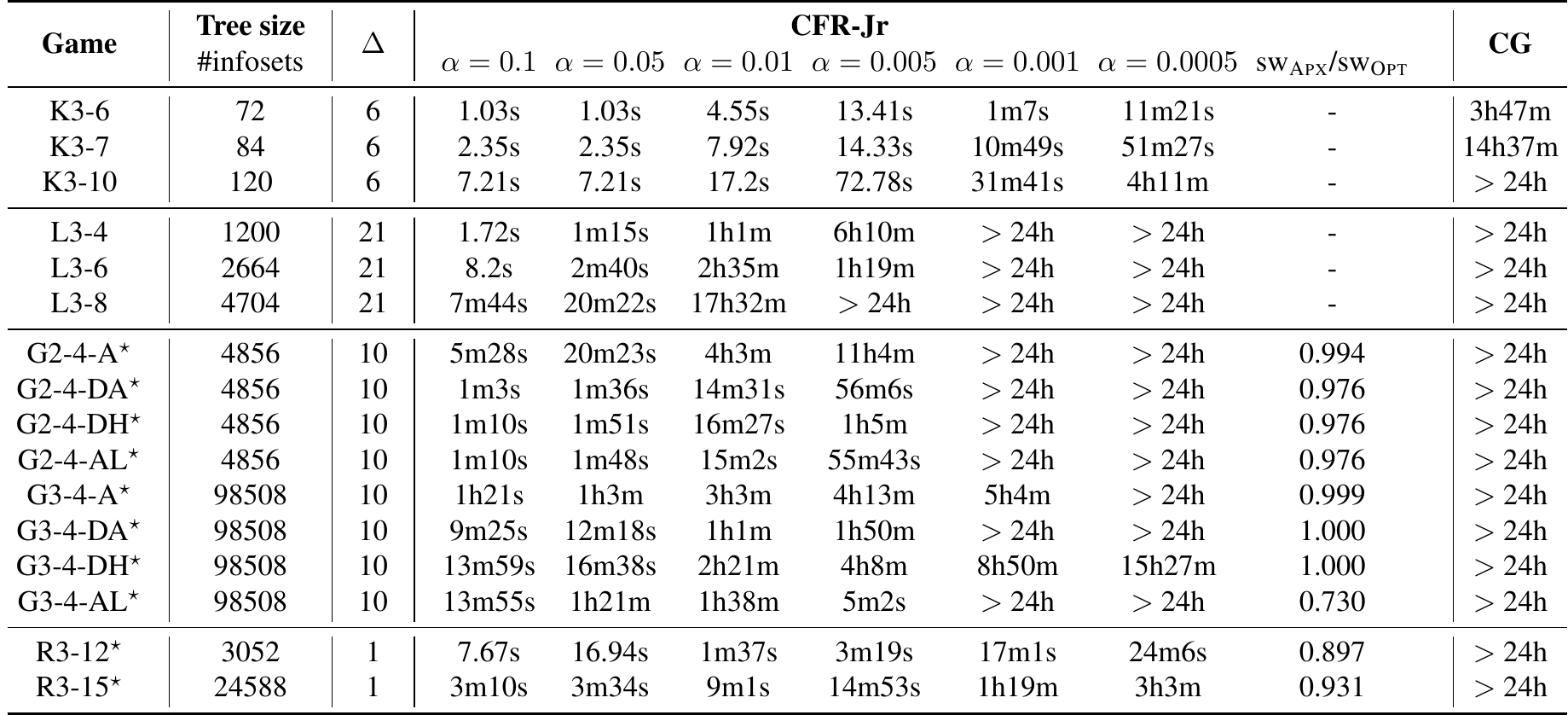}
	\caption{Running times and social welfare obtained by the CFR-Jr algorithm (for various levels of accuracy), and the CG algorithm. General-sum instances are marked with $^\star$.}
	\label{tab:full_convergence_cfr_jr}
\end{table*}

We tested CFR-Jr also on some extensive-form variants of the Shapley game, a normal-form general-sum 3x3 game that has been shown to induce cyclic non-convergent behaviors in iterative algorithms such as Fictitious Play~\cite{jafari2001no}. 
The results in Figures~\ref{fig:shapley_game}--\ref{fig:shapley_game_sw} clearly show that also CFR can get stuck in non-convergent cycles, confirming what we observed in Figure~\ref{fig:goof}.
This is well known in theory (there is no guarantee of convergence for CFR in general-sum games, even if two-player and no chance) but, to the best of out knowledge, was never observed in practice. 
Note that also CFR-S has some difficulties in reaching low values of $\varepsilon$, while CFR-Jr reaches a good approximation of an equilibrium point in few iterations. 
The game we used for this experiment is structured as follows:

\begin{itemize}
	\item There is a set of three cards, numbered 0 to 2.
	\item Player 1 plays a card out of the set, then Player 2 plays a hidden card out of the set, and finally Player 1 plays again a card out of the three.
	\item If the sum of the cards that have been played is $0 \bmod 3$, then the utility is $(0, 0)$, if the sum is $1 \bmod 3$, it is $(1, 0)$, and if it is $2 \bmod 3$, then the players receive a utility of $(0, 1)$.
	\item If the two cards played are equal in value, than the utility gained by each of the played is doubled (this step is fundamental to ensure that the uniform probability over the joint set of actions is not a CCE).
\end{itemize}


\begin{figure*}[t]
	\hspace{0.3cm}
	\begin{minipage}{.43\textwidth}
		\centering
		\includegraphics[width=\textwidth]{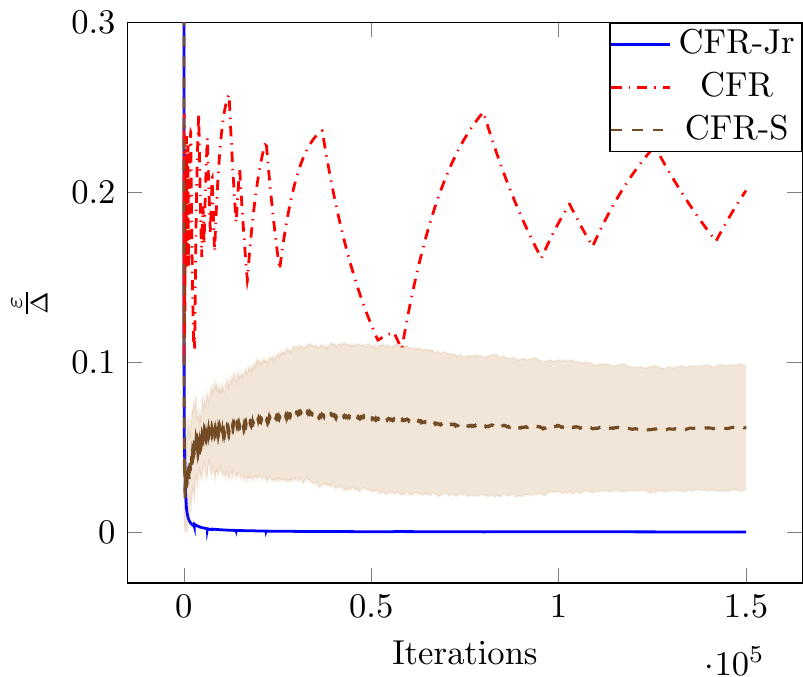}
		\caption{Convergence in number of iterations for the Shapley game}
		\label{fig:shapley_game}
	\end{minipage}
	\hspace{0.8cm}
	\begin{minipage}{.43\textwidth}
		\centering
		\includegraphics[width=\textwidth]{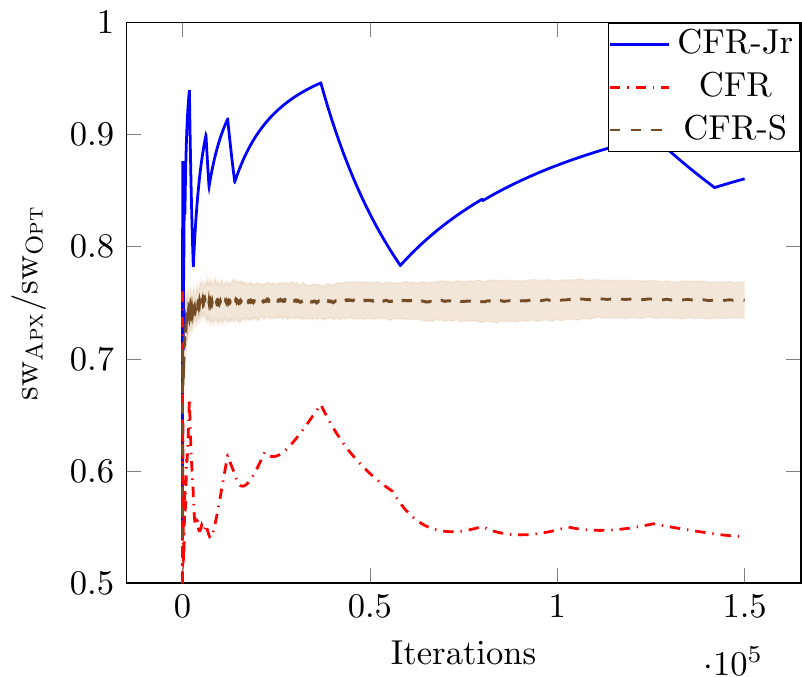}
		\caption{Social welfare attained with respect to the optimal one for the Shapley game}
		\label{fig:shapley_game_sw}
	\end{minipage}
	\hspace{0.3cm}
\end{figure*}

\subsection{CFR-Jr With Different Joint Distribution Reconstruction Rates}

From a theoretical standpoint, CFR-Jr requires a joint distribution reconstruction step to be carried out at every iteration $t$, to ensure that the resulting normal-form joint strategy approaches the set of CCEs (see Theorem~\ref{thm:eps_cce_avg_prod}). 
%
%
We investigate whether it is possible to trade some accuracy to reduce the computational time of the algorithm, by performing the joint distribution reconstruction at a subset of the iterations. 
This could also allow the algorithm to store smaller normal-form strategies, by skipping the reconstruction during the first iterations.
Indeed, during the first iterations, CFR (and, therefore, CFR-Jr) produces behavioral strategies that tend to be fairly uniformly distributed over all the possible actions, leading to $\omega$s assigning some probability to all terminals.
This implies the reconstructed normal-form strategies, in the first iterations, have considerably large supports.

\begin{algorithm}[H]
	\centering
	\scriptsize
	\caption{\texttt{CFR-Jr-$k$}}
	\begin{algorithmic}[1]
		\Function{\textsc{CFR-Jr}}{$\Gamma$}
			\State Initialize the joint strategy $\bar{x}$ to all zeros
			\State $t\gets 0$
			\While{$t < T$}
				\ForAll{$i \in \mathcal{P}$}
					\State $\pi_i^t \gets \textsc{CFR}(\Gamma, i)$
					\If{$t \bmod k = 0$}					
						\State $x_i^t \gets \textsc{NF-Strategy-Reconstruction}(\pi_i^t)$
					\EndIf
				\EndFor
					\If{$t \bmod k = 0$}					
				\State $\bar{x} \gets \bar{x} +  \bigotimes_{i \in \mathcal{P}} x_i^t$\Comment{$\bigotimes_{i \in \mathcal{P}} x_i^t$ is joint distribution $x^t$ defined as the product of the players' normal-form strategies}
					\EndIf
				\State $t\gets t+1$
			\EndWhile
			\Return $\bar{x}^T = \frac{\bar{x}}{\left\lfloor\frac{T}{k}\right\rfloor} $
		\EndFunction
	\end{algorithmic}
	\label{alg:cfr_jr_k}
\end{algorithm}

These considerations suggest that a slight modification of the CFR-Jr algorithm, that we call CFR-Jr-$k$ (see Algorithm~\ref{alg:cfr_jr_k}), may perform better in some settings. 
The idea behind CFR-Jr-$k$ is that the reconstruction procedure is carried out only every $k$ iterations. 
We have evaluated CFR-Jr-$k$ for different values of $k$. 
In all the tests we performed, the CFR-Jr algorithm always showed good convergence. 
In Figures~\ref{fig:reconstruction_rate1}--\ref{fig:reconstruction_rate3}, we report the experimental results related to instances of Kuhn3-6. 
The plots show both the convergence speed in terms of number of iterations and in terms of run time, as well as the size of the support of the average joint strategy that was stored by the algorithm (which is always monotonically increasing by construction). 
In Figures~\ref{fig:reconstruction_rate1_k_3_10}--\ref{fig:reconstruction_rate3_k_3_10}, we report the experimental results related to instances of Kuhn3-10.

In regard to the performances, larger reconstruction rates let the algorithm complete the same amount of iterations in a shorter time.
On the other hand, smaller reconstruction rates can lead earlier to a good joint strategy, and hence to reach lower values of $\varepsilon$. 
There is a trade-off between iteration speed and reconstruction accuracy, which can be exploited to tackle different problems with the most suited level of precision.

For what regards the size of the support of the joint average strategy, we can clearly see that lower reconstruction rates, running more times the reconstruction algorithm in the same amount of time, and being more susceptible to high-frequency variations in the behavioral strategies built by CFR, require up to ten time more space to store their joint strategies.

\begin{figure*}[t]
	\begin{minipage}{.3\textwidth}
		\centering
		\includegraphics[width=\textwidth]{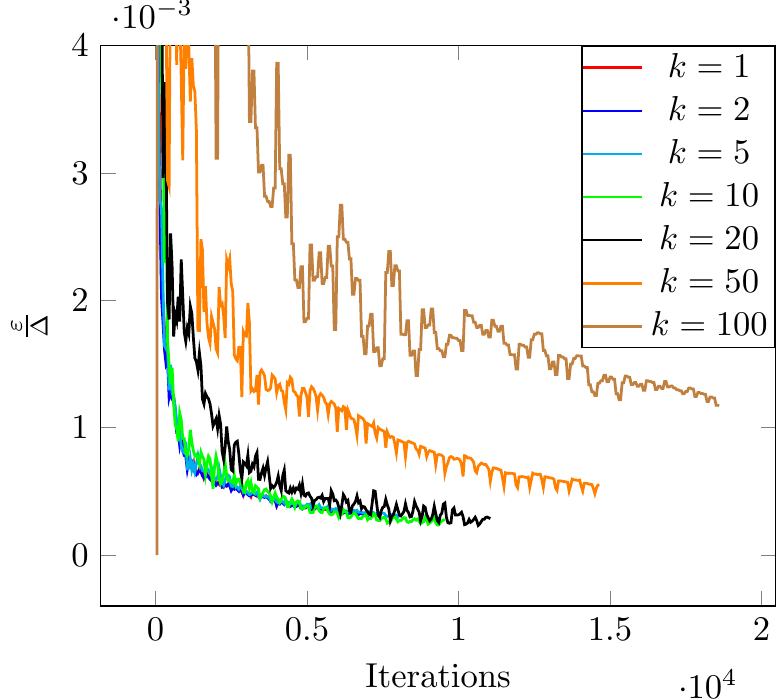}
		\caption{K3-6. Convergence in number of iterations for CFR-Jr with different reconstruction rates}
		\label{fig:reconstruction_rate1}
	\end{minipage}
	\hspace{0.5cm}
	\begin{minipage}{.3\textwidth}
		\centering
		\includegraphics[width=\textwidth]{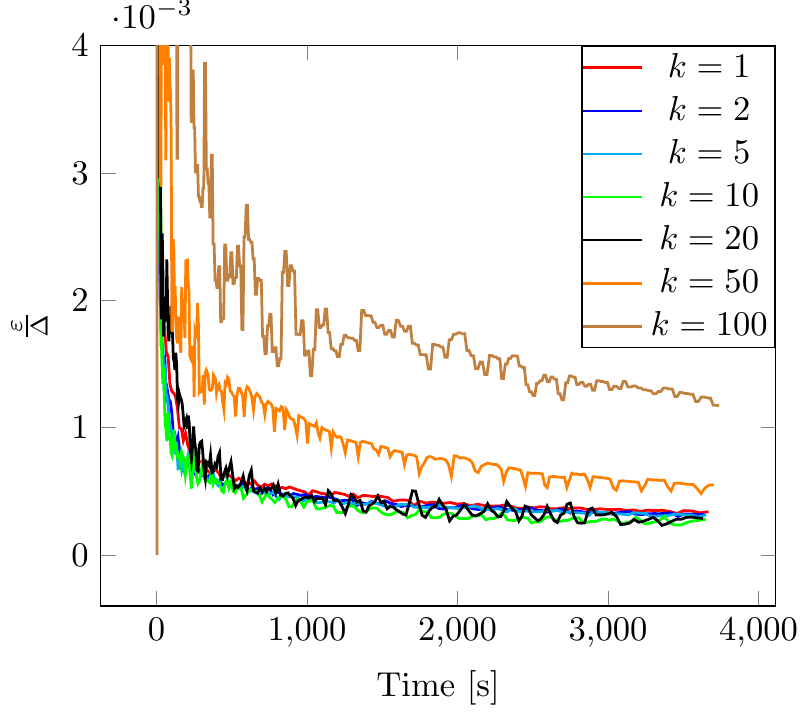}
		\caption{K3-6. Convergence in run time (seconds) for CFR-Jr with different reconstruction rates}
		\label{fig:reconstruction_rate2}
	\end{minipage}
	\hspace{0.5cm}
	\begin{minipage}{.3\textwidth}
		\vspace{2mm}
		\centering
		\includegraphics[width=\textwidth]{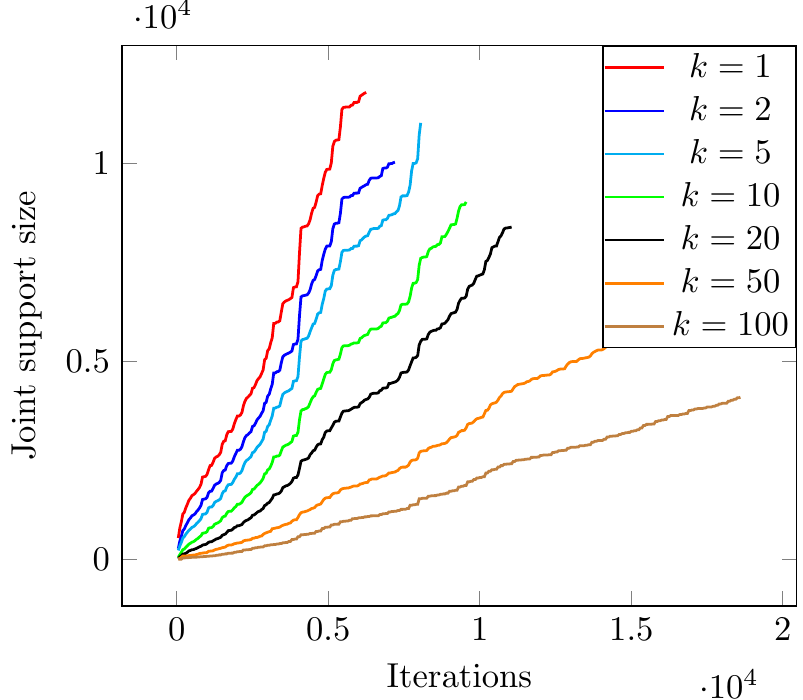}
		\caption{K3-6. Size of the support of the joint strategy obtained from CFR-Jr with different reconstruction rates}
		\label{fig:reconstruction_rate3}
	\end{minipage}	
\end{figure*}

\begin{figure*}[t]
	\begin{minipage}{.3\textwidth}
		\centering
		\includegraphics[width=\textwidth]{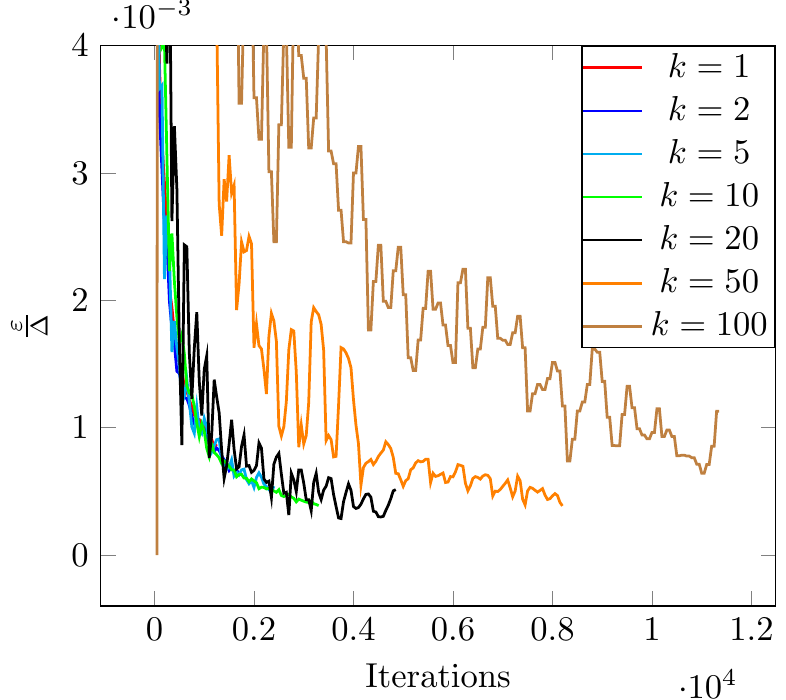}
		\caption{K3-10. Convergence in number of iterations for CFR-Jr with different reconstruction rates}
		\label{fig:reconstruction_rate1_k_3_10}
	\end{minipage}
	\hspace{0.5cm}
	\begin{minipage}{.3\textwidth}
		\centering
		\includegraphics[width=\textwidth]{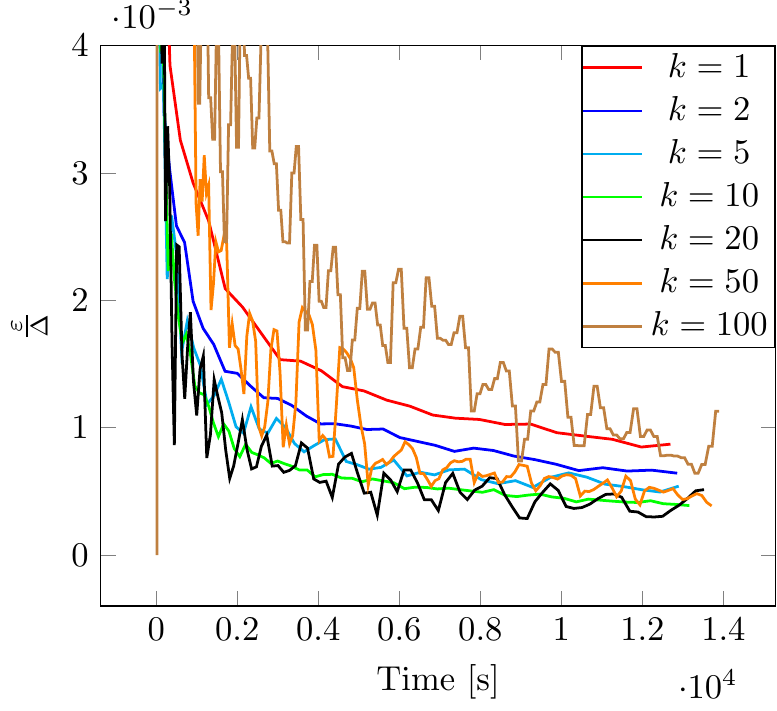}
		\caption{K3-10. Convergence in run time (seconds) for CFR-Jr with different reconstruction rates}
		\label{fig:reconstruction_rate2_k_3_10}
	\end{minipage}
	\hspace{0.5cm}
	\begin{minipage}{.3\textwidth}
		\vspace{1.8mm}
		\centering
		\includegraphics[width=\textwidth]{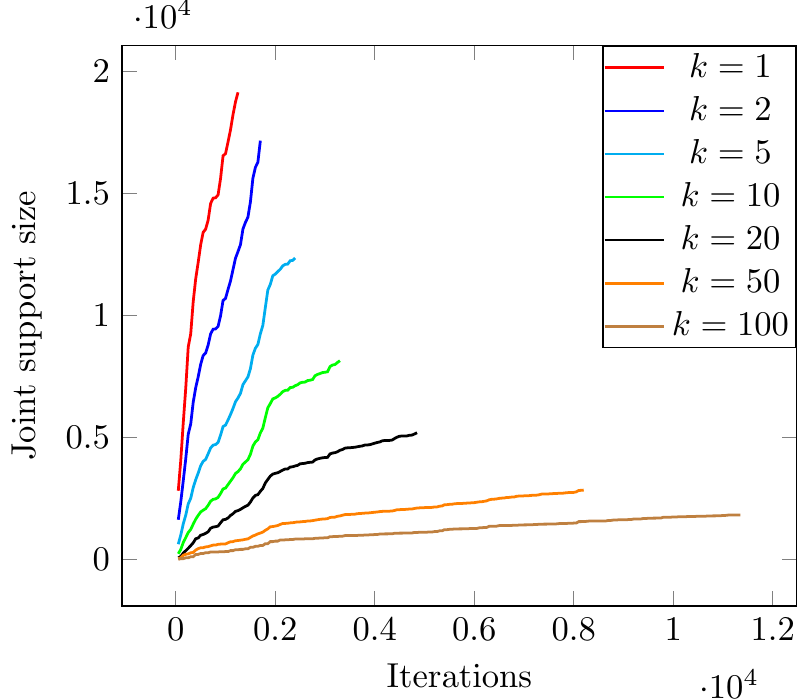}
		\caption{K3-10. Size of the support of the joint strategy obtained from CFR-Jr with different reconstruction rates}
		\label{fig:reconstruction_rate3_k_3_10}
	\end{minipage}	
\end{figure*}
\end{document}